\documentclass[journal,twoside,web]{ieeeconf}
\usepackage{graphicx}
\usepackage{amsmath}
\usepackage{amsfonts}
\usepackage{hyperref}
\usepackage{booktabs}
\usepackage{siunitx}
\usepackage{algpseudocode}
\usepackage{amssymb}
\usepackage{dsfont}
\usepackage{subfiles}
\usepackage{algorithm}
\hypersetup{
    colorlinks=true,
    linkcolor=blue,
    filecolor=magenta,      
    urlcolor=cyan,
    pdftitle={Overleaf Example},
    pdfpagemode=FullScreen,
    }
\newtheorem{theorem}{Theorem}[section]
\newtheorem{lemma}{Lemma}[section]
\newtheorem{remark}{Remark}[section]
\newtheorem{definition}{Definition}[section]
\newtheorem{proposition}{Proposition}[section]
\newtheorem{corollary}{Corollary}[section]
\newcommand\E{\mathbb{E}}

\newcommand\Lcal{\mathcal{L}}

\newcommand\Jcal{\mathcal{J}}

\newcommand\Hcal{\mathcal{H}}
\newcommand\Xcal{\mathcal{X}}

\newcommand\norm[1]{\lVert#1\rVert}

\DeclareMathOperator*{\sym}{\mathbb{S}}

\newcommand{\R}{{\mathbb R}}  

\algdef{SE}[DOWHILE]{Do}{doWhile}{\algorithmicdo}[1]{\algorithmicwhile\ #1}
\usepackage{cite}
\usepackage{algorithmicx}
\usepackage{textcomp}
\begin{document}
\title{Parallel Policy-Gradient Methods for Parameter Optimization of Nonlinear Feedback Controllers}
\author{An Nguyen and Leilei Cui
\thanks{This work is supported by the UNM School of Engineering faculty startup funds, the UNM Research Allocations Committee (RAC) grant, and the National Science Foundation under Award No. 2624522.}
\thanks{An Nguyen and Leilei Cui are with the Learning and Control (LC) lab, Department of Mechanical Engineering, University of New Mexico, Albuquerque, NM 87110 USA (email: annguyen2303@unm.edu; lcui@unm.edu).}}

\maketitle

\begin{abstract}
Structured feedback controllers provide rigorous stability guarantees, but their closed-loop performance often depends on manually tuned gains and design parameters. Policy-gradient methods offer a systematic approach to parameter optimization; however, conventional gradient evaluation requires sequential forward state rollout and backward costate propagation. This letter develops a time-parallel policy-gradient framework for discrete-time nonlinear control-affine systems. We derive the policy-gradient expression and recover the standard discrete-time linear-quadratic regulator result as a special case. The state and costate rollouts required for policy-gradient evaluation are formulated as residual-minimization problems and solved using Gauss-Newton (GN) iterations with parallel associative scans. For closed-loop systems that are globally asymptotically stable and locally exponentially stable, we show that the residual-minimization problems satisfy a local Polyak-Łojasiewicz (PL) inequality and that the GN iterates converge locally at a quadratic rate. Moreover, the PL constant, the size of the convergence neighborhood, and the quadratic convergence bound are independent of the rollout horizon \(T\). In addition, we establish global finite-step convergence of the GN state solver: for any finite rollout horizon \(T\), the exact state trajectory is recovered in at most \(T\) iterations from any initial guess. Finally, an inertia-wheel pendulum example with interconnection and damping assignment passivity-based control (IDA-PBC) demonstrates improved closed-loop performance and the computational benefits of the proposed parallel policy-gradient framework.

\end{abstract}

\begin{keywords}
Control-affine systems, nonlinear control, optimal control, policy optimization, parallel computing, policy gradient.
\end{keywords}

\section{Introduction}
\label{sec:introduction}
Nonlinear control methods are designed around analytical structure of the system \cite{khalil2002nonlinear,krstic1995nonlinear}, where they can provide stability guarantees while retaining physically meaningful controller structures. Their practical performance, however, often depends strongly on the selection of gains and other design parameters. These parameters are commonly chosen through trial and error or analytical guidelines, which becomes increasingly difficult for nonlinear systems with multiple coupled parameters. Policy-gradient methods offer an appealing alternative: rather than replacing the controller structure, they can optimize its parameters directly with respect to a performance objective \cite{massaroli2022optimal,larby2025optimal}. When optimization is restricted to a range of stabilizing controller parameters, the stability guarantees and physical structure of the original controller can be preserved. The success of policy-gradient methods in reinforcement learning and control further motivates their use as a systematic approach for tuning such structured feedback controllers \cite{hu2023toward,fazel2018global}.

For linear-quadratic control, policy gradients admit explicit forms with well-understood convergence and robustness properties \cite{fazel2018global,cui2024small}, whereas nonlinear systems generally require numerical gradient computation \cite{jin2020pontryagin}. Conventionally, computing the policy gradient requires sequential forward rollout of the state trajectory and backward propagation of the costate over the horizon $T$. This inherently sequential procedure can become computationally expensive for long-horizon problems and constitutes a major bottleneck of policy-gradient methods. In this paper, we develop a time-parallel approach for evaluating the state and costate trajectories required for policy-gradient computation, enabling substantial acceleration over conventional sequential methods on parallel hardware such as GPUs. Recent work has explored time-parallel evaluation of nonlinear dynamical systems. DEER \cite{lim2024parallelizing,blelloch1990prefix} reformulates sequential trajectory evaluation as an optimization problem involving a residual-based merit function and applies a Gauss-Newton method, with the resulting computations parallelized using associative scans. DeepPCR \cite{danieli2023deeppcr} adopts a related formulation but employs parallel cyclic reduction to exploit temporal parallelism. 

The computational efficiency of this reformulation depends on
the cost of each optimization step and the number of steps
required for convergence. Associative scans provide
$\mathcal{O}(\log T)$ parallel computational depth per step
\cite{lim2024parallelizing,blelloch1990prefix}.
The iteration count also depends on the initialization and
the conditioning of the merit function. Existing analyses
relate this conditioning and the resulting convergence
guarantees to contraction or predictability conditions
expressed through bounds on Jacobian products
\cite{gonzalez2026predictability}. Global asymptotic stability
alone does not ensure such bounds, since trajectories may
exhibit noncontractive transients or nonexponential decay.

In this paper, we show that global asymptotic stability
together with local exponential stability yields a
horizon-independent bound on the inverse residual Jacobian
near the exact trajectory, uniformly over a compact set of
initial states for fixed controller parameters. The key
observation is that global asymptotic stability ensures
entry into a sufficiently small neighborhood of the
equilibrium after a uniformly bounded number of steps,
while local exponential stability provides contraction
of the Jacobians there in an appropriate norm.
Consequently, only a uniformly bounded initial portion
of the trajectory may be noncontractive. Smoothness ensures
that the resulting Jacobian-product bounds also hold for
nearby trajectory estimates. This establishes a local
Polyak-\L{}ojasiewicz (PL) inequality and local linear and
quadratic convergence of the state solver, with constants
and a neighborhood radius independent of the horizon.

The main contributions of this letter are:
\begin{enumerate}
\item
We derive a policy-gradient formula for discrete-time
control-affine nonlinear systems that is consistent with
the deterministic policy gradient in
\cite{silver2014deterministic} and recovers the standard
LQR policy gradient as a special case.

\item
We develop a time-parallel method for optimizing nonlinear
feedback-controller parameters by evaluating the state
and costate trajectories using Gauss-Newton updates
and associative scans that are amenable to GPU parallelization.

\item
We establish local inverse-Jacobian and PL bounds under
closed-loop global asymptotic stability and local
exponential stability, yielding local linear and quadratic
convergence with horizon-independent constants and basin
radius. Separately, the causal residual structure ensures
that the exact full-step state solver terminates within
$T$ iterations for any finite horizon $T$ and arbitrary
initialization, while the fixed-state costate problem
is solved in one exact step.
\end{enumerate}

The proposed approach is evaluated on discrete-time LQR and the inertia-wheel pendulum.

\section{Policy Optimization for Nonlinear Systems}
In this paper, we consider the discrete-time nonlinear system
\begin{align*}
    x_{k+1} = f(x_k) + g(x_k) u_k, \qquad k=0,1,2,\ldots
\end{align*}
where $x_k\in \R^n$ denotes the state, $u_k\in\R^m$ denotes the control input, and $f:\R^n\to\R^n$ and $g:\R^n\to\R^{n\times m}$ are assumed to be smooth, with $f(0)=0$. The feedback policy is parameterized as $u_k = \pi(x_k,\theta)$, where $\pi:\R^n\times\R^p \to \R^m$ is a smooth function and $\theta\in\R^p$ denotes the controller parameter vector.

Define the discrete-time trajectory as $x(k,x_0,\theta)$, which is generated by the closed-loop dynamics starting from $x_0$ under the policy $u_k = \pi(x_k,\theta)$. The cost functional is defined as
\begin{align*}
    \Jcal(x_0,\theta) &= \sum_{k=0}^\infty l(x(k,x_0,\theta), \theta), 
\end{align*}
where 
\[
l(x,\theta) = q(x) + \pi(x,\theta)^\top R \pi(x,\theta),
\]
$q:\R^n\to\R_+$ is a positive-definite state-cost function and $R\in\sym^m_{++}$ is the control cost matrix. For all $k$, the closed-loop dynamics is given by
\begin{align}
    x_{k+1} := F(x_k,\theta) = f(x_k) + g(x_k) \pi(x_k,\theta).
    \label{eq:closedloop}
\end{align}
\begin{definition}
    A policy $\pi(x_k,\theta)$ is admissible if the closed-loop system \eqref{eq:closedloop} is globally asymptotically stable (GAS) and locally exponentially stable (LES) at the origin.
\end{definition}

We aim to develop a gradient-based method for solving
\begin{align*}
\min_{\theta\in\Theta}\;\bar{\Jcal}(\theta),
\qquad
\bar{\Jcal}(\theta)
:=
\E_{x_0\sim\mathcal{X}}
\left[\Jcal(x_0,\theta)\right],
\end{align*}
where $\Theta$ denotes the set of admissible policy parameters associated with controllers designed using classical nonlinear control methods, and $\mathcal{X}$ is a distribution for initial states supported on a compact set $\Omega$. The policy gradient $\nabla_\theta \Jcal(x_0,\theta)$ is derived in the following theorem.

\begin{theorem}
    \label{theorem:policy_gradient}
    Let $\pi(x,\theta)$ be an admissible policy. Consider the following discrete-time Hamiltonian:
    \begin{align*}
    \Hcal(x_k,\lambda_{k+1},\theta) = l(x_k,\theta) + \lambda_{k+1}^\top F(x_k,\theta),
    \end{align*}
    with the discrete-time Hamiltonian dynamics given by:
    \begin{subequations}\label{eq:HamiltonDynamics}
        \begin{align}
        x_{k+1} &= \frac{\partial\Hcal(x_k,\lambda_{k+1},\theta)}{\partial \lambda_{k+1}}, \qquad x_0 \in \Omega,\\
        \lambda_k &= \frac{\partial\Hcal(x_k,\lambda_{k+1},\theta)}{\partial x_k}, \qquad \lim_{k\to\infty} \lambda_k = 0.
        \end{align}
    \end{subequations}
    
The policy gradient of the cost functional $\Jcal(x_0,\theta)$ is
    \begin{equation}
    \begin{split}
        &\nabla_\theta \Jcal(x_0,\theta) \\
        &= \sum_{k=0}^\infty \left(\nabla_\theta \pi(x_k,\theta)\right)^\top 
    \Big(2R\,\pi(x_k,\theta) + g(x_k)^\top \lambda_{k+1}\Big).
    \end{split}
    \label{eq:policy_gradient}
    \end{equation}
\end{theorem}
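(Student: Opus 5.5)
The approach is a discrete-time adjoint (Lagrangian) argument. The infinite-horizon gradient is obtained as the limit of finite-horizon gradients, and the costate is what absorbs the state sensitivities. Write $x_k = x(k,x_0,\theta)$ and $S_k := \partial x_k/\partial\theta \in \R^{n\times p}$, with $S_0 = 0$ since $x_0$ does not depend on $\theta$. Differentiating \eqref{eq:closedloop} gives the sensitivity recursion
\[
S_{k+1} = A_k S_k + B_k,\qquad A_k := \partial_x F(x_k,\theta),\quad B_k := g(x_k)\nabla_\theta\pi(x_k,\theta).
\]
For the truncated cost $\Jcal_N := \sum_{k=0}^{N} l(x_k,\theta)$ we then have
\[
\nabla_\theta \Jcal_N = \sum_{k=0}^N \big( S_k^\top \nabla_x l(x_k,\theta) + \nabla_\theta l(x_k,\theta)\big),
\qquad
\nabla_\theta l = (\nabla_\theta\pi)^\top 2R\pi .
\]

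Next define the finite-horizon costate by backward recursion, $\lambda^N_{N+1} = 0$ and $\lambda^N_k = \nabla_x l(x_k,\theta) + A_k^\top \lambda^N_{k+1}$. This is exactly the second equation of \eqref{eq:HamiltonDynamics}, since $\partial\Hcal/\partial x_k = \nabla_x l + (\partial_x F)^\top\lambda_{k+1}$; the first equation of \eqref{eq:HamiltonDynamics} simply reproduces the closed-loop dynamics. We substitute $\nabla_x l(x_k,\theta) = \lambda^N_k - A_k^\top\lambda^N_{k+1}$ and use the sensitivity recursion. The sum $\sum_k S_k^\top(\lambda^N_k - A_k^\top\lambda^N_{k+1})$ then telescopes: $S_k^\top A_k^\top \lambda^N_{k+1} = (S_{k+1}-B_k)^\top\lambda^N_{k+1}$. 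Using $S_0=0$ and $\lambda^N_{N+1}=0$, this leaves $\sum_{k=0}^N B_k^\top \lambda^N_{k+1}$. Hence
\[
\nabla_\theta\Jcal_N = \sum_{k=0}^N (\nabla_\theta\pi(x_k,\theta))^\top\big(2R\pi(x_k,\theta) + g(x_k)^\top\lambda^N_{k+1}\big),
\]
which is \eqref{eq:policy_gradient} at finite horizon.

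The main obstacle is passing to $N\to\infty$. Three facts are needed: $\lambda^N_k \to \lambda_k := \sum_{j\ge k} (A_{j-1}\cdots A_k)^\top \nabla_x l(x_j,\theta)$ (with the empty product equal to the identity for $j=k$); this limit satisfies $\lambda_k\to 0$; and the differentiation/limit interchange is justified. I would use admissibility. By GAS, $x_k$ enters any small neighborhood of the origin after finitely many steps $K$. By LES, there exists a norm in which the $A_j$ are contractions, $\norm{A_j}\le\rho<1$, for $j\ge K$ (the Jacobian at the origin is Schur, and continuity extends this nearby). Moreover $x_k\to0$ exponentially, so $\nabla_x l(x_k,\theta)=O(\norm{x_k})$ decays geometrically, because $\nabla q(0)=0$ at the minimizer and $\pi(0,\theta)=0$ at the equilibrium. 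Together these give absolute, geometrically convergent series for $\lambda_k$, with $\lambda_k\to0$, and $\lambda^N_k\to\lambda_k$. The same argument shows $S_k$ stays bounded, and that the summands of both $\nabla_\theta\Jcal_N$ and $l(x_k,\theta)$ decay geometrically. These bounds hold locally uniformly in $\theta$, by smoothness and the fact that LES is an open condition on the Schur Jacobian, so $\Jcal_N\to\Jcal$ with derivatives converging locally uniformly. Therefore $\nabla_\theta\Jcal = \lim_N\nabla_\theta\Jcal_N$, which yields \eqref{eq:policy_gradient}.

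Finally, the costate with terminal condition $\lim\lambda_k=0$ is unique among bounded solutions of the backward recursion. The difference of two such solutions satisfies $\delta_k = A_k^\top\delta_{k+1}$, so $\delta_k = (A_{N}\cdots A_k)^\top\delta_{N+1}$. The product is bounded uniformly in $N$ (finitely many noncontractive factors, then contractions), while $\delta_{N+1}\to0$; hence $\delta_k=0$. So the costate in the statement is the one constructed above. As a sanity check, in the LQR case $\lambda_k = 2Px_k$ and the formula reduces to the known gradient.
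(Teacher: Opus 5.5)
Your proof is correct. Its core is the paper's adjoint identity: the sensitivity recursion $S_{k+1}=\nabla_xF\,S_k+\nabla_\theta F$, combined with the costate recursion, makes $\sum_k S_k^\top\nabla_x l$ telescope to $\sum_k\nabla_\theta F^\top\lambda_{k+1}$ up to boundary terms.

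What differs is where the telescoping is done. The paper telescopes directly over the infinite horizon. It differentiates $\sum_k l(x_k,\theta)$ termwise without justification. It then discards the boundary term $\lim_{k\to\infty}S_k^\top\lambda_k$ citing only $\lambda_k\to0$ and $S_0=0$. That step tacitly requires $S_k$ to be bounded, and it presupposes that a costate with $\lim_k\lambda_k=0$ exists.

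You instead truncate at $N$ with $\lambda^N_{N+1}=0$, so the boundary term vanishes exactly. You then supply what the paper omits. GAS gives a finite entry time and LES gives contracting Jacobians; this is the same Jacobian-product bound the paper later uses in Lemma~\ref{lem:GN-inverse}. The resulting geometric decay does several jobs at once:
\begin{itemize}
\item it constructs $\lambda_k$ as a convergent series and proves $\lambda^N_k\to\lambda_k$ and $\lambda_k\to0$;
\item it bounds $S_k$;
\item it justifies exchanging $\nabla_\theta$ with $N\to\infty$.
\end{itemize}
Your uniqueness step then identifies this $\lambda_k$ with the costate in the statement. The paper's version is shorter but only formal; yours is a rigorous proof of the same identity.

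Three assumptions in your argument are worth stating explicitly:
\begin{itemize}
\item Admissibility only gives $g(0)\pi(0,\theta)=0$. So $\pi(0,\theta)=0$ must come from finiteness of $\Jcal$, since otherwise $l(x_k,\theta)\to\pi(0,\theta)^\top R\,\pi(0,\theta)>0$.
\item Local uniformity in $\theta'$ needs the origin to remain an equilibrium for all nearby $\theta'$.
\item The bound $\nabla_x l=O(\|x\|)$ uses $q\in C^2$.
\end{itemize}
The paper leaves all of these implicit as well.
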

\begin{proof}
    Define the sensitivity function as 
    \begin{align*}
    S(k,x_0,\theta) = \frac{\partial x(k,x_0,\theta)}{\partial \theta}\in \R^{n\times p}.
    \end{align*}
    Taking the derivative of the closed-loop dynamics with respect to $\theta$ gives
    \begin{align}
    S(k+1,x_0,\theta) = \nabla_x F(x_k,\theta) S(k,x_0,\theta) + \nabla_\theta F(x_k,\theta).
    \label{eq:sk1}
    \end{align}
    The costate dynamics in \eqref{eq:HamiltonDynamics} can be rewritten as
    \begin{align}
    \lambda_k :=G(x_k,\lambda_{k+1},\theta) = \nabla_x l(x_k,\theta) + \nabla_x F(x_k,\theta)^\top \lambda_{k+1}.
    \label{eq:lamk1}
    \end{align}
    Hence, by \eqref{eq:sk1} and \eqref{eq:lamk1} we can compute
    \begin{align*}
    S(k+1,x_0,\theta)^\top \lambda_{k+1}&=\nabla_\theta F(x_k,\theta)^\top \lambda_{k+1} +  S(k,x_0,\theta)^\top \lambda_k\\
    &\qquad   - S(k,x_0,\theta)^\top \nabla_x l(x_k,\theta).
    \end{align*}
    Group the terms and define the forward difference:
    \begin{equation*}
    \Delta(S(k,x_0,\theta)^\top \lambda_k) = S(k+1,x_0,\theta)^\top \lambda_{k+1} - S(k,x_0,\theta)^\top \lambda_k.
    \end{equation*}
    Therefore, it follows that
    \begin{align*}
    &\Delta(S(k,x_0,\theta)^\top \lambda_k)\\
    &= - S(k,x_0,\theta)^\top \nabla_x l(x_k,\theta) + \nabla_\theta F(x_k,\theta)^\top \lambda_{k+1}.
    \end{align*}
    
    Additionally, the derivative of the cost functional with respect to $\theta$, is:
    \begin{align*}
    \nabla_\theta \Jcal(x_0,\theta) = \sum_{k=0}^\infty \Big( S(k,x_0,\theta)^\top \nabla_x l(x_k,\theta) + \nabla_\theta l(x_k,\theta) \Big).
    \end{align*}
    Substituting the expression of $\Delta(S(k,x_0,\theta)^\top \lambda_k)$ into the above equation gives
    \begin{align*}
    \nabla_\theta \Jcal(x_0,\theta)
    &= \sum_{k=0}^\infty \Big(\nabla_\theta F(x_k,\theta)^\top \lambda_{k+1}+ \nabla_\theta l(x_k,\theta) \Big)\\ 
    &- \lim_{k\to\infty}(S(k,x_0,\theta)^\top \lambda_k) + S(0,x_0,\theta)^\top \lambda_0.
    \end{align*}
    In the above equation, since $\lim_{k\to\infty} \lambda_k = 0$ and $S(0,x_0,\theta) = 0$, the boundary terms vanish. Hence, we have
    \begin{align*}
    \nabla_\theta \Jcal(x_0,\theta) &= \sum_{k=0}^\infty \Big(\nabla_\theta F(x_k,\theta)^\top \lambda_{k+1} + \nabla_\theta l(x_k,\theta) \Big).
    \end{align*}
    
    It follows from the closed-loop dynamic and the cost function that
    \begin{align*}
    \nabla_\theta F(x_k,\theta)^\top &= \nabla_\theta \pi(x_k,\theta)^\top g(x_k)^\top,\\
    \nabla_\theta l(x_k,\theta) &= 2\nabla_\theta \pi(x_k,\theta)^\top R\pi(x_k,\theta) ,
    \end{align*}
    which, in turn, results in \eqref{eq:policy_gradient}.
    
\end{proof}

\begin{remark}\label{remark:costate}
By the Bellman equation, the cost function satisfies
\begin{align*}
\Jcal(x_k,\theta)
=
l(x_k,\theta)
+
\Jcal(F(x_k,\theta),\theta).
\end{align*}
Differentiating both sides with respect to $x_k$ gives
\begin{align*}
\nabla_x \Jcal(x_k,\theta)
=
\nabla_x l(x_k,\theta)
+
\nabla_x F(x_k,\theta)^\top
\nabla_x \Jcal(x_{k+1},\theta).
\end{align*}
Therefore, by comparing with \eqref{eq:lamk1}, the costate admits the interpretation
\[
\lambda_k
=
\nabla_x \Jcal(x_k,\theta).
\]
\end{remark}


\begin{remark}
Define the state occupation measure induced by the policy $\pi(\cdot,\theta)$ and the initial-state distribution $\mathcal{X}$ as
\begin{align*}
    \rho_\theta(\mathrm{d}x)
    :=
    \E_{x_0\sim\mathcal{X}}
    \left[
        \sum_{k=0}^{\infty}
        \delta_{x(k,x_0,\theta)}(\mathrm{d}x)
    \right].
\end{align*}
Then, for any integrable function $\phi$,
\begin{align*}
    \int_{\R^n}\phi(x)\rho_\theta(\mathrm{d}x)
    =
    \E_{x_0\sim\mathcal{X}}
    \left[
        \sum_{k=0}^{\infty}\phi(x(k,x_0,\theta))
    \right].
\end{align*}
Using the $Q$-function
\begin{align*}
    Q(x,u,\theta)
    :=
    q(x)+u^\top Ru
    +
    \Jcal\!\left(f(x)+g(x)u,\theta\right),
\end{align*}
we have
\begin{align*}
    \nabla_u Q(x,u,\theta)
    =
    2Ru
    +
    g(x)^\top
    \nabla_x \Jcal\!\left(f(x)+g(x)u,\theta\right).
\end{align*}
Therefore, by Remark \ref{remark:costate}, the policy gradient in Theorem~\ref{theorem:policy_gradient} can be equivalently written as
\begin{align*}
    \nabla_\theta \bar{\Jcal}(\theta)
    =
    \int_{\R^n}
    \left(\nabla_\theta \pi(x,\theta)\right)^\top
    \left.
    \nabla_u Q(x,u,\theta)
    \right|_{u=\pi(x,\theta)}
    \rho_\theta(\mathrm{d}x).
\end{align*}

This expression has the same form as the deterministic policy gradient theorem in \cite{silver2014deterministic}, with $\rho_\theta$ playing the role of the state visitation measure.
\end{remark}

For linear time-invariant (LTI) systems with quadratic costs, the policy-gradient expression in Theorem~\ref{theorem:policy_gradient} reduces to the standard LQR policy gradient derived in \cite{fazel2018global}.

\begin{corollary}
    Consider the LTI system with $f(x_k)=Ax_k$ and $g(x_k)=B$, where $A\in\R^{n\times n}$ and $B\in\R^{n\times m}$, together with the quadratic state cost \( q(x_k)=x_k^\top Qx_k\), where $Q\in\mathbb{S}_{++}^n$. Then, for any stabilizing linear
    policy $u_k=-Kx_k$, the policy gradient is given by
    \begin{equation*}
        \nabla_K \Jcal(x_0,K)
        =
        2\left[
        \left(R+B^\top P_KB\right)K
        -B^\top P_KA
        \right]\Sigma_K,
    \end{equation*}
    where \(\Sigma_K := \sum_{k=0}^{\infty}x_kx_k^\top\), and $P_K=P_K^\top\succ0$ is the solution to
    \begin{equation*}
        P_K
        =
        (A-BK)^\top P_K(A-BK)
        +Q+K^\top RK.
    \end{equation*}
\end{corollary}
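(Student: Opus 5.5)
The plan is to specialize Theorem~\ref{theorem:policy_gradient} to the linear-quadratic setting and then identify the costate explicitly. With $f(x)=Ax$, $g(x)=B$, and $\pi(x,K)=-Kx$, the closed-loop map is $F(x,K)=(A-BK)x$. The stage cost is $l(x,K)=x^\top(Q+K^\top RK)x$. Since $u_k=-Kx_k$ is stabilizing, $A-BK$ is Schur. Hence the Lyapunov equation defining $P_K$ has a unique solution, and this solution is symmetric and positive definite because $Q\succ0$. The policy is therefore admissible: the closed loop is GAS and LES, with exponential decay of $x_k$.

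The first step is to solve the costate recursion \eqref{eq:lamk1} in closed form. I will posit $\lambda_k=2P_Kx_k$, which matches Remark~\ref{remark:costate}, since $\Jcal(x,K)=x^\top P_Kx$ follows by summing the closed-loop cost. To verify this guess, I will check that $\nabla_x l(x_k,K)+(A-BK)^\top\lambda_{k+1}$ equals
\[
2(Q+K^\top RK)x_k+2(A-BK)^\top P_K(A-BK)x_k,
\]
which is $2P_Kx_k$ by the Lyapunov equation. The terminal condition $\lambda_k\to0$ holds because $x_k\to0$. Uniqueness of the costate with vanishing limit follows by unrolling the recursion: the difference of two solutions satisfies $\delta_k=(A-BK)^\top\delta_{k+1}$, so $\delta_k=((A-BK)^\top)^N\delta_{k+N}\to0$.

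The second step is to substitute into \eqref{eq:policy_gradient}. The main subtlety is the matrix parameter: $\theta=\mathrm{vec}(K)$, and $\nabla_\theta\pi$ must be handled carefully. I will avoid Jacobian bookkeeping by using directional derivatives. For a perturbation $\Delta K$, the term $(\nabla_\theta\pi)^\top v$ pairs with $\Delta K$ as $-\langle \Delta K x_k, v\rangle=-\mathrm{tr}(\Delta K^\top v x_k^\top)$. In matrix form, the gradient contribution of step $k$ is therefore $-v_kx_k^\top$, where
\[
v_k=2R\pi(x_k,K)+B^\top\lambda_{k+1}=-2RKx_k+2B^\top P_K(A-BK)x_k.
\]
Summing the contributions gives
\[
\nabla_K\Jcal=\sum_k 2\big[(R+B^\top P_KB)K-B^\top P_KA\big]x_kx_k^\top,
\]
and pulling the constant matrix out of the sum yields the stated factor $\Sigma_K$. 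The series $\Sigma_K$ converges because $x_k$ decays geometrically.

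The main obstacle I expect is this sign and transpose bookkeeping in the matrix-parameter conversion. A secondary issue is justifying the interchange of the infinite sum with differentiation in Theorem~\ref{theorem:policy_gradient}. In the LTI case the interchange is immediate from geometric decay, uniform for $K$ in a neighborhood within the stabilizing set.
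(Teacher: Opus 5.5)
Your proposal is correct and follows the paper's proof: specialize Theorem~\ref{theorem:policy_gradient} to obtain $\sum_k (2RKx_k - B^\top\lambda_{k+1})x_k^\top$, identify $\lambda_{k+1}=2P_K(A-BK)x_k$, and substitute. The one difference is how you justify the costate. You verify $\lambda_k=2P_Kx_k$ directly from the Lyapunov equation and prove that the decaying costate is unique. The paper instead reads it off from the quadratic value function $x^\top P_Kx$ and the interpretation $\lambda_k=\nabla_x\Jcal(x_k,K)$ in Remark~\ref{remark:costate}, so your route is slightly more self-contained.
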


\begin{proof}
    It follows from Theorem~\ref{theorem:policy_gradient} that, for the
    LQR problem,
    \begin{equation}
        \label{eq:lqr_policy_gradient}
        \nabla_K \Jcal(x_0,K)
        =
        \sum_{k=0}^{\infty}
        \left(
            2RKx_k-B^\top\lambda_{k+1}
        \right)x_k^\top.
    \end{equation}
    
    Since $K$ is stabilizing, the corresponding value function is quadratic \cite{Hewer1971}, i.e. \(\Jcal(x_k,K)=x_k^\top P_Kx_k\). Hence, by Remark~\ref{remark:costate},
    \[
        \lambda_{k+1}
        =
        \nabla_{x_{k+1}}\Jcal(x_{k+1},K)
        =
        2P_Kx_{k+1}
        =
        2P_K(A-BK)x_k.
    \]
    Substituting this expression into
    \eqref{eq:lqr_policy_gradient} yields
    \begin{align*}
        \nabla_K \Jcal(x_0,K)
        &=
        2\left[
            RK-B^\top P_K(A-BK)
        \right]
        \sum_{k=0}^{\infty}x_kx_k^\top \\
        &=
        2\left[
            \left(R+B^\top P_KB\right)K
            -B^\top P_KA
        \right]\Sigma_K,
    \end{align*}
    which completes the proof.
\end{proof}
\section{Parallelized Monte Carlo Approximation of the Policy Gradient}
In this section, we employ a Monte Carlo method to approximate $\nabla_\theta \bar{\Jcal}(\theta) = \E_{x_0\sim\Xcal} \left[\nabla_\theta \Jcal(x_0,\theta)\right]$. Specifically, we draw $N$ initial states $\{x_{0,j}\}_{j=1}^N$ independently and identically distributed (i.i.d.) from $\Xcal$ with $j$ is the index of the $j$-th sample, and approximate the policy gradient by
\begin{align}\label{eq:MCsampling}
    \hat{\nabla}_\theta \bar{\Jcal}(\theta)
    =
    \frac{1}{N}
    \sum_{j=1}^N
    \nabla_\theta \Jcal(x_{0,j},\theta).
\end{align}

Let $x(k,x_{0,j},\theta)$ and $\lambda(k,x_{0,j},\theta)$ denote the state and costate trajectories of \eqref{eq:HamiltonDynamics}, respectively, with $x_0=x_{0,j}$ and $\lim_{k\to\infty}\lambda(k,x_{0,j},\theta)=0$. Since $\theta\in\Theta$ is an admissible parameter vector, the corresponding closed-loop trajectories satisfy $\lim_{k\to\infty}x(k,x_{0,j},\theta)=0$ and $\lim_{k\to\infty}\lambda(k,x_{0,j},\theta)=0$. Therefore, the tail contribution in the policy-gradient expression of Theorem~\ref{theorem:policy_gradient} vanishes asymptotically, and the infinite-horizon policy gradient can be approximated using state and costate trajectories over a sufficiently large finite horizon $T$.

It follows from Theorem~\ref{theorem:policy_gradient} that computing $\nabla_\theta \Jcal(x_{0,j},\theta)$ requires evaluating the Hamiltonian dynamics \eqref{eq:HamiltonDynamics} over $k=0,\ldots,T$. These recursions are inherently sequential and can become computationally expensive for long horizons. To alleviate this temporal bottleneck, we reformulate the sequential state and costate recursions as a root-finding optimization problem and exploit associative scans \cite{blelloch1989scans} to parallelize the trajectory evaluation across time.

Consider the stacked state and costate estimates
\begin{align*}
    s_{x,j}
    &=
    [\xi_1^\top,\cdots,\xi_T^\top]^\top
    \in\R^{nT},\\
    s_{\lambda,j}
    &=
    [\phi_{T-1}^\top,\cdots,\phi_0^\top]^\top
    \in\R^{nT},
\end{align*}
which approximate the corresponding state and costate trajectories
\begin{align*}
    s_{x,j}^{*}
    &=
    [x(1,x_{0,j},\theta)^\top,\cdots,x(T,x_{0,j},\theta)^\top]^\top
    \in\R^{nT},\\
    s_{\lambda,j}^{*}
    &=
    [\lambda(T-1,x_{0,j},\theta)^\top,\cdots,\lambda(0,x_{0,j},\theta)^\top]^\top
    \in\R^{nT},
\end{align*}
respectively. Hereafter, the subscript $j$, and the policy parameter $\theta$ are omitted for notational simplicity whenever no confusion arises. In particular, we write
$x_k=x(k,x_{0,j},\theta)$ and
$\lambda_k=\lambda(k,x_{0,j},\theta)$.
The residuals associated with the stacked state and costate estimates are then defined as
\begin{align*}
    r_x(s_x)
    &=
    [\xi_1^\top-F(\xi_0)^\top,
    \cdots,
    \xi_T^\top-F(\xi_{T-1})^\top]^\top,\\
    r_\lambda(s_\lambda)
    &=
    [\phi_{T-1}^\top-G(x_{T-1},\phi_T)^\top,
    \cdots,
    \phi_0^\top-G(x_0,\phi_1)^\top]^\top.
\end{align*}

The two trajectories $s_x^*$ and $s_\lambda^*$ are evaluated using two separate Gauss-Newton passes. The first pass evaluates \(s_x^*\) from a given initial state $\xi_0 = x_{0,j}$. The resulting state trajectory is then held fixed, the terminal condition $\phi_T =\nabla_x l(x_T)$ is evaluated, and the second pass evaluates \(s_\lambda^*\) backward in time through \eqref{eq:lamk1}. In each pass, we solves the following residual-minimization problem for both the state and costate:
\begin{subequations}\label{eq:Deerminimize}
    \begin{align}
    \min_{s_x \in \R^{nT}}\Lcal_x(s_x)=\frac12\|r_x(s_x)\|_2^2 \,\\
    \min_{s_\lambda \in \R^{nT}}\Lcal_\lambda(s_\lambda)=\frac12\|r_\lambda(s_\lambda)\|_2^2 \,
    \end{align}
\end{subequations}

The residual Jacobian $J(s_x)=\partial r_x(s_x)/\partial s_x$ and $J(s_\lambda)=\partial r_\lambda(s_\lambda)/\partial s_\lambda$ are square and nonsingular. Consequently, the Gauss-Newton step for \eqref{eq:Deerminimize} is
\begin{subequations}\label{eq:GNIterations}
    \begin{align}
    s_x^{(i+1)}-s_x^{(i)}=\Delta s_x^{(i)}
    =
    -J(s_x^{(i)})^{-1}r_x(s_x^{(i)}), \label{eq:s_update}\\
    s_\lambda^{(i+1)}-s_\lambda^{(i)}=\Delta s_\lambda^{(i)}
    =
    -J(s_\lambda^{(i)})^{-1}r_\lambda(s_\lambda^{(i)}). \label{eq:lambda_update}
\end{align}
\end{subequations}
The Jacobian $J(s_x)$ has a block diagonal structure 
\begin{align*}
J(s_x)=
\begin{bmatrix}
I_n
&0
&\cdots
&0 &0\\
-\nabla_xF(\xi_0)
&I_n &\cdots &0 &0\\
\vdots
&\vdots & \ddots &\vdots &\vdots\\
0 &0 &\cdots &I_n &0 \\
0 &0 &\cdots
&-\nabla_xF(\xi_{T-1}) &I_n
\end{bmatrix}.
\end{align*}

Similarly, $J(s_\lambda)$ is obtained by replacing $\nabla_xF(\xi_k)$ in the above expression with $\nabla_{\lambda}G(x_{T-k-1},\phi_{T-k})$. Since $\Delta \xi_0^{(i)}=0$, the state correction satisfies
\begin{align}
    \Delta \xi_{k+1}^{(i)}
    =
    \nabla_xF(\xi_k^{(i)},\theta)\Delta \xi_k^{(i)}
    -
    r_{x,{k+1}}^{(i)},
    \label{eq:state_cor}
\end{align}
where $\Delta \xi_{k+1}^{(i)} = \xi_{k+1}^{(i+1)} - \xi_{k+1}^{(i)}$. During the costate pass, the state trajectory and terminal costate are held fixed. Hence, \(\Delta\phi_T^{(i)}=0\), $\Delta \phi_{k}^{(i)} = \phi_{k}^{(i+1)} - \phi_{k}^{(i)}$, and the costate correction satisfies
\begin{align}
    \Delta\phi_k^{(i)}
    =
    \nabla_{\lambda}G(x_k,\phi_{k+1}^{(i)},\theta)
    \Delta\phi_{k+1}^{(i)}
    -
    r_{\lambda,k}^{(i)}.
    \label{eq:costate_cor}
\end{align}

\subsection{Parallel evaluation by associative scan}

Since the state and costate correction equations are affine, one can parallelize the root-finding problem using an associative scan \cite{blelloch1990prefix}. An illustration of associative scan is given in Figure \ref{fig:scan}. Consider an affine recursion of the form \(z_{k+1}=E_k z_k+F_k\). Each update can be viewed as an affine map \(z\mapsto E_k z+F_k\), represented by the pair \((E_k,F_k)\). The composition of two affine maps is represented by the binary operator
\begin{align*}
(E_2,F_2)\odot(E_1,F_1)
:=
\left(E_2E_1,E_2F_1+F_2\right).
\end{align*}
This operator is associative because it corresponds directly to function composition. Consequently, the sequence of affine recursions can be evaluated through a parallel prefix scan, which computes all cumulative affine-map compositions simultaneously. 


For the state pass, define $\mathfrak a_{x_k}^{(i)}:=(\nabla_xF(\xi_k^{(i)},\theta),-r_{x,{k+1}}^{(i)})$. Then all state corrections are recovered from the prefix scan
\begin{align*}
    \Delta \xi_{k+1}^{(i)}
    =
    \left(
        \mathfrak a_{x_k}^{(i)}
        \odot\cdots\odot
        \mathfrak a_{x_0}^{(i)}
    \right)
    [\Delta \xi_0^{(i)}].
\end{align*}
Similarly, for the fixed-state costate pass, define $\mathfrak a_{\phi_k}^{(i)}:=(\nabla_{\lambda}G(x_k,\phi^{(i)}_{k+1}),-r_{\lambda,k}^{(i)})$, the costate corrections are obtained from the suffix scan
\begin{align*}
    \Delta\phi_k^{(i)}
    =
    \left(
        \mathfrak a_{\lambda_k}^{(i)}
        \odot\cdots\odot
        \mathfrak a_{\lambda_{T-1}}^{(i)}
    \right)
    [\Delta\phi_T^{(i)}].
\end{align*}

\begin{figure}
    \centering
    \includegraphics[width=1\linewidth]{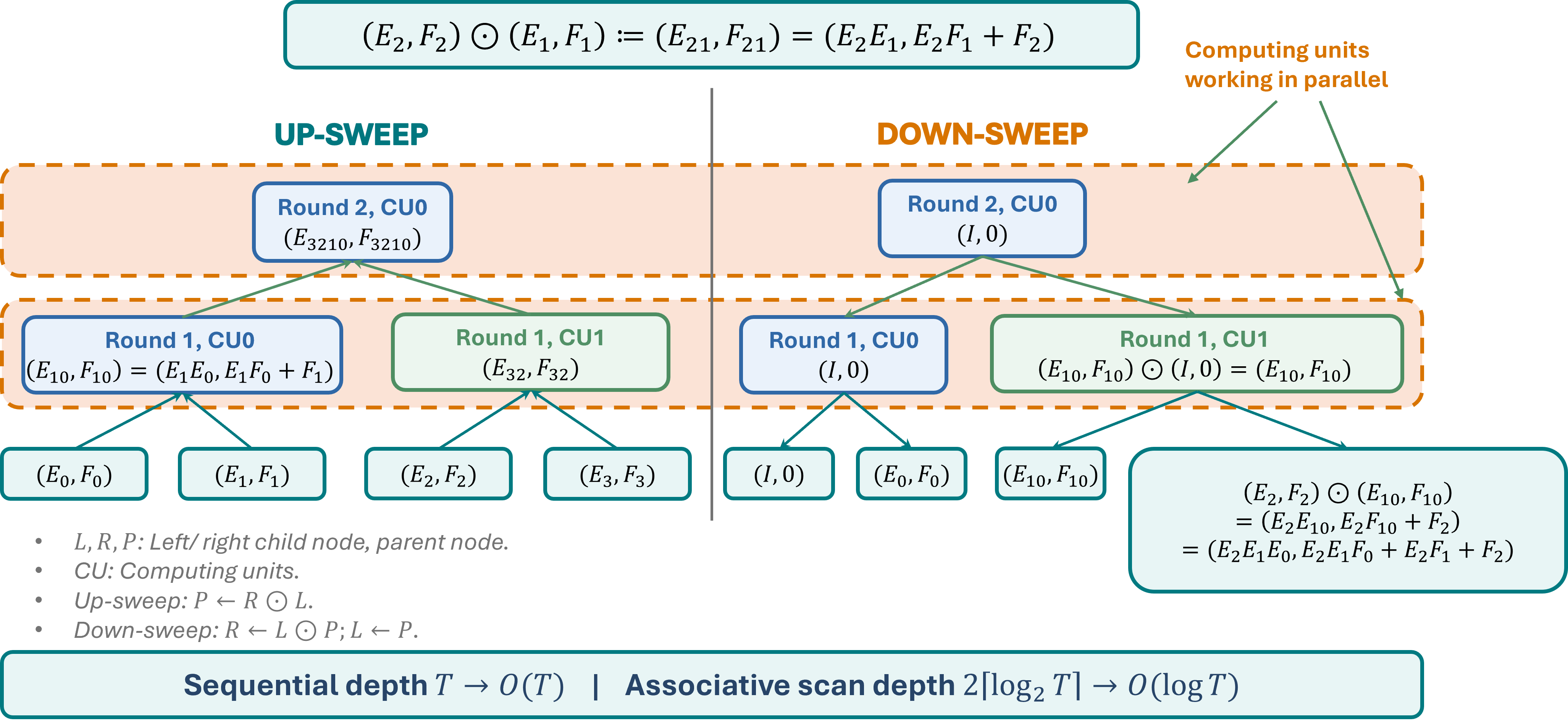}
    \caption{Associative-scan evaluation example of the affine recursion for $T=4$.}
    \label{fig:scan}
\end{figure}

At each Gauss-Newton iteration, residuals and Jacobian blocks are evaluated in parallel across time and combined by an associative scan, yielding \(\mathcal O(\log T)\) parallel depth and \(\mathcal O(T)\) total compositions. 

\begin{algorithm}[!t]
\caption{Monte Carlo Policy-Gradient Optimization}
\label{alg:two_pass_deer}
\begin{algorithmic}[1]
\Require Initial policy parameter $\theta,\theta_{\text{old}}\in\Theta$, horizon $T$, number of Monte Carlo samples $N$, initial-state distribution $\Xcal$, policy step sizes $\{\eta_\ell\}$, tolerance $\epsilon_{\mathrm{D}}$, and policy-gradient tolerance $\epsilon_{\mathrm{PG}}$
\Ensure Optimized policy parameter $\theta$

\State $\ell \leftarrow 0$
\Repeat
    \State Sample $N$ initial states $\{x_{0,j}\}_{j=1}^N
    \overset{\mathrm{i.i.d.}}{\sim}\Xcal$

    \ForAll{$j=1,\ldots,N$ \textbf{in parallel}}
        \State Initialize $s_{x,j}\in\R^{nT}$

        \Repeat
            \State Compute $\Delta s_{x,j}$ from \eqref{eq:state_cor}
            by associative scan
            \State $s_{x,j}
            \leftarrow
            s_{x,j}+\Delta s_{x,j}$
        \Until{$\norm{\Delta s_{x,j}}\le\epsilon_{\mathrm{D}}$}


        \State Initialize $s_{\lambda,j}\in \R^{nT}$

        \Repeat
            \State Compute $\Delta s_{\lambda,j}$ from 
            \eqref{eq:costate_cor}
            by associative scan
            \State $s_{\lambda,j}
            \leftarrow
            s_{\lambda,j}+\Delta s_{\lambda,j}$
        \Until{$\norm{\Delta s_{\lambda,j}}\le\epsilon_{\mathrm{D}}$}

        \State Compute $\nabla_\theta\Jcal(x_{0,j},\theta)$ from \eqref{eq:policy_gradient}, $s_{x,j}$ and $s_{\lambda,j}$
    \EndFor

    \State Compute $\hat{\nabla}_\theta\bar{\Jcal}(\theta)$ from
    \eqref{eq:MCsampling}
    \State $\theta_{\text{old}}\leftarrow\theta$
    \State $\theta
    \leftarrow
    \mathrm{Proj}_{\Theta}
    \left(
        \theta_{\text{old}}-\eta_\ell\hat{\nabla}_\theta\bar{\Jcal}(\theta_{\text{old}})
    \right)$
    \State $\ell\leftarrow\ell+1$

\Until{$\norm{\theta_\text{old}-\theta}
\le\epsilon_{\mathrm{PG}}$}

\State \Return $\theta$
\end{algorithmic}
\end{algorithm}

Algorithm~\ref{alg:two_pass_deer} estimates $\nabla_\theta \Bar{\Jcal}=\E_{x_0}\!\left[\nabla_\theta \Jcal(x_0,\theta)\right]$
using Monte Carlo samples. For each initial condition, a forward Gauss-Newton pass computes the state trajectory and a reverse pass computes the costate trajectory. The resulting gradients are averaged to update \(\theta\), while the projected gradient descent keeps \(\theta\) in the admissible region $\Theta$.

\subsection{Convergence Analysis of the Gauss-Newton Iterations}

Fix $\theta\in\Theta$ and omit it from $F$ for simplicity.
GAS and compactness of $\Omega$ imply uniform boundedness
of the exact trajectories. Let $K$ be a compact convex
neighborhood of their closure. Define
\[
\|v\|_\infty:=\max_k\|v_k\|_2,
\qquad
e^{(i)}:=s_x^{(i)}-s_x^*.
\]

\begin{lemma}
\label{lemma:Lipchitz}
Let $L>0$ be a Lipschitz bound for $\nabla_xF$ on $K$,
which exists by smoothness. Then
\[
\|J(s_1)-J(s_2)\|_2
\le L\|s_1-s_2\|_\infty,
\qquad s_1,s_2\in K^T.
\]
\end{lemma}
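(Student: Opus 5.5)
The plan is to exploit the structure of $J(s_x)$. For any two stacked trajectories $s_1=[\xi_1^\top,\dots,\xi_T^\top]^\top$ and $s_2=[\eta_1^\top,\dots,\eta_T^\top]^\top$ in $K^T$, with the common initial state $\xi_0=\eta_0=x_0$, the identity blocks cancel in $J(s_1)-J(s_2)$. The difference is therefore supported on the first block subdiagonal. Its nonzero blocks are $D_k:=-\big(\nabla_xF(\xi_k)-\nabla_xF(\eta_k)\big)$ for $k=0,\dots,T-1$, placed at block position $(k+1,k)$ under the paper's indexing. The block with $k=0$ vanishes because $\xi_0=\eta_0$.

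The key step is that a matrix with a single nonzero block diagonal, possibly shifted, has spectral norm equal to the largest spectral norm of its blocks. Write $M=J(s_1)-J(s_2)$ and let $v=[v_1^\top,\dots,v_T^\top]^\top$. Then $(Mv)_{k+1}=D_k v_k$, since each block row of $M$ contains at most one nonzero block. This gives
\[
\|Mv\|_2^2=\sum_k\|D_kv_k\|_2^2\le \max_k\|D_k\|_2^2\sum_k\|v_k\|_2^2 ,
\]
so $\|M\|_2\le\max_k\|D_k\|_2$. The shift only reindexes the blocks, and distinct block rows pick out distinct block columns, so no cross terms appear.

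Next I bound each block using the Lipschitz hypothesis on $K$. Since $\xi_k,\eta_k\in K$,
\[
\|D_k\|_2=\|\nabla_xF(\xi_k)-\nabla_xF(\eta_k)\|_2\le L\|\xi_k-\eta_k\|_2\le L\|s_1-s_2\|_\infty .
\]
The last inequality is just the definition of $\|\cdot\|_\infty$ as the maximum blockwise Euclidean norm. Taking the maximum over $k$ yields the claim. The constant $L$ exists because $\nabla_xF$ is $C^1$ by smoothness of $f$, $g$ and $\pi$, so its derivative is bounded on the compact convex set $K$. The mean value inequality along segments then makes $\nabla_xF$ Lipschitz on $K$ in the operator norm, and convexity of $K$ guarantees that those segments stay inside $K$.

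The only point requiring care is the norm identity for the shifted block-diagonal matrix, since a crude bound such as the Frobenius norm would introduce a factor depending on $T$. The blockwise computation above avoids this and keeps the constant horizon-independent, which is what the later horizon-independent PL bounds need. Everything else is routine.
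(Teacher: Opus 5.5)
Your proposal is correct and follows the paper's route: the identity blocks cancel, $J(s_1)-J(s_2)$ is supported on the first block subdiagonal with blocks $\nabla_xF(\eta_k)-\nabla_xF(\xi_k)$, its spectral norm is controlled by the largest block norm, and each block is bounded by $L\|s_1-s_2\|_\infty$. You also make explicit two points the paper leaves implicit: the row-by-row computation giving $\|M\|_2\le\max_k\|D_k\|_2$ with no $T$-dependent factor, and the existence of $L$ via the mean value inequality on the compact convex set $K$. Your remark that the $k=0$ block vanishes also harmlessly absorbs the off-by-one indexing in the paper's displayed Jacobian.
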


\begin{proof}
For $s_1=(z_1,\ldots,z_T)$ and $s_2=(y_1,\ldots,y_T)$,
the Jacobian difference has only the subdiagonal blocks
$\nabla_xF(y_k)-\nabla_xF(z_k)$. Hence, for $T\ge2$,
\begin{align*}
\|J(s_1)-J(s_2)\|_2
&=\max_{1\le k<T}
  \|\nabla_xF(z_k)-\nabla_xF(y_k)\|_2\\
&\le L\|s_1-s_2\|_\infty.
\end{align*}
\end{proof}

\begin{lemma}
\label{lem:GN-inverse}
There exist $\eta_\theta,M_\theta>0$, independent of $T$,
such that, 
\begin{equation}\label{eq:GN-inverse}
\begin{aligned}
\|J(s_x)^{-1}\|_2&\le M_\theta.
\end{aligned}
\end{equation}
for any $s_x \in \R^{nT}$ satisfying $\|s_x-s_x^*\|_\infty\le\eta_\theta$,
\end{lemma}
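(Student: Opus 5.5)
The plan is to write $J(s_x)^{-1}$ explicitly and bound its blocks by Jacobian products. $J(s_x)=I-N$, where $N$ is the block subdiagonal matrix with blocks $A_k:=\nabla_xF(\xi_k)$. Because $N$ is nilpotent, $J^{-1}=\sum_{m=0}^{T-1}N^m$. Its $(k,l)$ block ($k\ge l$) is the transition product $\Phi(k,l):=A_{k-1}A_{k-2}\cdots A_l$, with $\Phi(l,l)=I$; the indices may be shifted by one relative to the residual ordering, which is immaterial. For a block lower-triangular matrix, the Schur test gives
\[
\|J^{-1}\|_2\le\Big(\max_k\sum_{l\le k}\|\Phi(k,l)\|_2\Big)^{1/2}\Big(\max_l\sum_{k\ge l}\|\Phi(k,l)\|_2\Big)^{1/2}.
\]
So it suffices to find constants $C\ge1$ and $\rho\in(0,1)$, independent of $T$, such that $\|\Phi(k,l)\|_2\le C\rho^{k-l}$ for all $0\le l\le k\le T$ whenever $\|s_x-s_x^*\|_\infty\le\eta_\theta$. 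This yields $M_\theta=C/(1-\rho)$.

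\textbf{Step 1 (contraction near the origin).} LES implies that $A_0:=\nabla_xF(0)$ is Schur stable. Indeed, if $F$ is locally exponentially stable and smooth, its linearization cannot have spectral radius $\ge1$; alternatively, I would take Schur stability of the linearization as the working meaning of LES. Choose $P\succ0$ solving $A_0^\top PA_0-P=-I$, and let $\|\cdot\|_P$ be the induced norm. Then $\|A_0\|_P<1$. By continuity of $\nabla_xF$, there are $r>0$ and $\rho<1$ with $\|\nabla_xF(z)\|_P\le\rho$ for all $\|z\|\le 2r$.

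\textbf{Step 2 (uniform entry time).} By GAS and compactness of $\Omega$, uniform attractivity holds: there is $k^\star$, independent of $x_0\in\Omega$ and of $T$, such that $\|x_k\|\le r$ for all $k\ge k^\star$. For a time-invariant GAS system this follows from stability of the origin together with a finite-cover argument using continuity of $F$. Then, if $\eta_\theta\le r$, every estimate satisfies $\|\xi_k\|\le 2r$ for $k\ge k^\star$, so each such factor contracts in $\|\cdot\|_P$. For $k<k^\star$, all $\xi_k$ lie in the $\eta_\theta$-neighborhood of the compact set $K$, where $\|\nabla_xF\|_P\le\beta$ for some $\beta$ (take $\beta\ge1$). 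Every product $\Phi(k,l)$ contains at most $k^\star$ factors with index below $k^\star$, so
\[
\|\Phi(k,l)\|_P\le \beta^{k^\star}\rho^{-k^\star}\rho^{k-l}.
\]
Converting back to the Euclidean norm costs the condition number $\kappa(P)^{1/2}$, which gives $C=\kappa(P)^{1/2}(\beta/\rho)^{k^\star}$.

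\textbf{Step 3 (conclusion).} Insert the bound from Step 2 into the Schur test; both geometric sums are at most $C/(1-\rho)$. Hence $\eta_\theta=r$ and $M_\theta=C/(1-\rho)$ work, and they depend only on $\theta$, $\Omega$ and $F$, not on $T$. Lemma~\ref{lemma:Lipchitz} is not needed here. It would give an alternative route, namely a perturbation of $J(s_x^*)^{-1}$, but the direct product bound is cleaner because it avoids requiring a smallness condition tied to $M_\theta$.

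The main obstacle is Step 2. Its first part, uniform entry of the exact trajectories into the small ball, needs the uniform attractivity argument for GAS on compact sets. Its second part must keep the noncontractive transient confined to a bounded number of steps. Since the estimates $\xi_k$ only stay within $\eta_\theta$ of the true trajectory, $\eta_\theta$ must be smaller than the ball radius used for contraction. This is why the contraction region in Step 1 is chosen with radius $2r$ while the entry target is radius $r$.
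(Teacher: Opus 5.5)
Your proposal is correct and follows essentially the same route as the paper: the Neumann series $J^{-1}=\sum_{\ell}N^\ell$ for the nilpotent $N$, a Lyapunov-weighted contraction of $\nabla_xF$ near the origin, and a uniform entry time from GAS and compactness of $\Omega$, so that every Jacobian product has a bounded number of noncontractive factors and $M_\theta=C/(1-\rho)$. The only difference is cosmetic: you bound $\|J^{-1}\|_2$ with the Schur test on block norms, whereas the paper observes that each $N^\ell$ is a block shift with $\|N^\ell\|_2\le C_\theta c^\ell$ and sums these bounds with the triangle inequality, which yields the same constant.
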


\begin{proof}
Smoothness and LES imply that $\nabla_xF(0)$ is Schur stable.
The discrete Lyapunov inequality and continuity give
$O\succ0$, $c\in(0,1)$, and $\eta_\theta>0$ such that
\[
\|O\nabla_xF(z)O^{-1}\|_2\le c
\qquad\text{for }\|z\|_2\le2\eta_\theta.
\]

Shrink $\eta_\theta$ so that the corresponding neighborhoods
of all exact trajectories lie in $K$.
GAS and compactness of $\Omega$ give a uniform finite $\tau$
such that $\|x_k^*\|_2\le\eta_\theta$ for all $k\ge\tau$.

For $\|s_x-s_x^*\|_\infty\le\eta_\theta$, all $\xi_k$ lie
in $K$, and $\|\xi_k\|_2\le2\eta_\theta$ for $k\ge\tau$.
Thus, each consecutive Jacobian product has at most $\tau$
factors outside the contracting neighborhood.
Since $K$ is compact, define
\begin{align*}
\alpha
&:=\max\left\{1,\max_{z\in K}
             \|O\nabla_xF(z)O^{-1}\|_2\right\},\\
C_\theta
&:=\|O^{-1}\|_2\|O\|_2
   \left(\frac{\alpha}{c}\right)^\tau.
\end{align*}

Each product of $\ell$ consecutive Jacobians contains
at most $\min\{\ell,\tau\}$ factors outside the
contracting neighborhood. Since $\alpha\ge1>c$,
\begin{equation}\label{eq:GN-products}
\begin{aligned}
&\|\nabla_xF(\xi_{j+\ell-1})
       \cdots\nabla_xF(\xi_j)\|_2\\
&\quad\le
\|O^{-1}\|_2\|O\|_2\,c^\ell
\left(\frac{\alpha}{c}\right)^{\min\{\ell,\tau\}}\\
&\quad\le C_\theta c^\ell.
\end{aligned}
\end{equation}

Write $J=I-N$. Since $N^T=0$,
\[
J^{-1}=\sum_{\ell=0}^{T-1}N^\ell.
\]

Each $N^\ell$ is a block shift whose nonzero blocks
are the products in \eqref{eq:GN-products}. Consequently,
\[
\|N^\ell\|_2\le C_\theta c^\ell,
\qquad
\|N^\ell v\|_\infty
\le C_\theta c^\ell\|v\|_\infty.
\]

Summing these gives \eqref{eq:GN-inverse} with $M_\theta=C_\theta/(1-c)$.
\end{proof}
As a direct consequence of Lemma~\ref{lem:GN-inverse}, the following result establishes that the merit function $\Lcal_x$ satisfies a local PL condition.

\begin{corollary}
\label{cor:GN-PL}
For $\|s_x-s_x^*\|_\infty\le\eta_\theta$,
the merit function satisfies
\begin{equation}\label{eq:GN-PL}
\frac12\|\nabla\mathcal L_x(s_x)\|_2^2
\ge M_\theta^{-2}\mathcal L_x(s_x).
\end{equation}
\end{corollary}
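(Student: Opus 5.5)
The plan is to combine the gradient formula for a least-squares merit function with the inverse-Jacobian bound from Lemma~\ref{lem:GN-inverse}. Since $\Lcal_x(s_x)=\frac12\|r_x(s_x)\|_2^2$, the chain rule gives
\[
\nabla\Lcal_x(s_x)=J(s_x)^\top r_x(s_x).
\]
The whole argument then reduces to a lower bound on $\|J(s_x)^\top r_x(s_x)\|_2$ in terms of $\|r_x(s_x)\|_2$.

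Fix $s_x$ with $\|s_x-s_x^*\|_\infty\le\eta_\theta$. By Lemma~\ref{lem:GN-inverse}, $J(s_x)$ is invertible and satisfies $\|J(s_x)^{-1}\|_2\le M_\theta$. Invertibility is in fact automatic, since $J$ is block unit lower triangular. Transposition preserves the spectral norm, so $\|J(s_x)^{-\top}\|_2\le M_\theta$ as well. Writing
\[
r_x=J(s_x)^{-\top}\,J(s_x)^\top r_x,
\]
we get
\[
\|r_x\|_2\le M_\theta\,\|J(s_x)^\top r_x\|_2 .
\]
Equivalently, the smallest singular value of $J(s_x)$ is at least $M_\theta^{-1}$. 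Squaring this inequality gives
\[
\|\nabla\Lcal_x(s_x)\|_2^2\ge M_\theta^{-2}\|r_x\|_2^2=2M_\theta^{-2}\Lcal_x(s_x).
\]
Dividing by $2$ yields \eqref{eq:GN-PL}.

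The only point that needs care is that Lemma~\ref{lem:GN-inverse} bounds the operator $2$-norm of $J^{-1}$ on the full stacked space $\R^{nT}$, not merely the $\infty$-block norm. The statement of Lemma~\ref{lem:GN-inverse} is given in $\|\cdot\|_2$, and its proof bounds each $\|N^\ell\|_2$ through block-shift products. I take this as given, so the PL constant $M_\theta^{-2}$ inherits horizon independence directly from $M_\theta$.
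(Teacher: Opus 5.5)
Your proposal is correct and follows the paper's proof: both write $\nabla\mathcal L_x(s_x)=J(s_x)^\top r_x(s_x)$ and use Lemma~\ref{lem:GN-inverse} in the form $\sigma_{\min}(J(s_x))\ge M_\theta^{-1}$ to obtain $\|\nabla\mathcal L_x(s_x)\|_2^2\ge M_\theta^{-2}\|r_x(s_x)\|_2^2=2M_\theta^{-2}\mathcal L_x(s_x)$. Your caveat about the lemma is also resolved in your favor: each $N^\ell$ has a single nonzero block subdiagonal, so its spectral norm on $\R^{nT}$ equals its largest block norm, and the bound $\|J(s_x)^{-1}\|_2\le M_\theta$ holds in the full $2$-norm.
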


\begin{proof}
Lemma~\ref{lem:GN-inverse} gives
$\sigma_{\min}(J(s_x))\ge M_\theta^{-1}$.
Since $\nabla\mathcal L_x(s_x)=J(s_x)^\top r_x(s_x)$,
\[
\frac12\|\nabla\mathcal L_x(s_x)\|_2^2
\ge\frac{1}{2M_\theta^2}\|r_x(s_x)\|_2^2
=M_\theta^{-2}\mathcal L_x(s_x).
\]
\end{proof}

\begin{theorem}
\label{thm:local_linear}
Define
\begin{equation}\label{eq:GN-basin}
\delta_\theta
:=\min\left\{\eta_\theta,\frac{1}{LM_\theta}\right\}.
\end{equation}
If $\|e^{(0)}\|_\infty\le\delta_\theta$, the exact
full-step Gauss-Newton iterates remain in this
neighborhood and satisfy
\begin{equation}\label{eq:GN-linear}
\|e^{(i+1)}\|_2\le\frac12\|e^{(i)}\|_2.
\end{equation}
\end{theorem}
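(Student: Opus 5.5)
The plan is the standard Newton-type error recursion, exploiting the fact that the residual is exactly satisfied at the true trajectory, $r_x(s_x^*)=0$. Let $s:=s_x^{(i)}$ and $e:=e^{(i)}$, and suppose inductively that $\|e\|_\infty\le\delta_\theta$. Then $s$ lies in the neighborhood where Lemma~\ref{lem:GN-inverse} applies. I will also arrange that $s\in K^T$, shrinking $\eta_\theta$ as in the proof of Lemma~\ref{lem:GN-inverse}.

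From the update \eqref{eq:s_update},
\[
e^{(i+1)}=e-J(s)^{-1}\big(r_x(s)-r_x(s_x^*)\big)
=J(s)^{-1}\Big[J(s)e-\big(r_x(s)-r_x(s_x^*)\big)\Big].
\]
The segment $s_x^*+te$, $t\in[0,1]$, stays in the $\|\cdot\|_\infty$-ball of radius $\delta_\theta$ around $s_x^*$, and it stays in $K^T$ because $K$ is convex. So the fundamental theorem of calculus gives
\[
r_x(s)-r_x(s_x^*)=\int_0^1 J(s_x^*+te)\,e\,dt,
\]
and hence
\[
e^{(i+1)}=J(s)^{-1}\int_0^1\big(J(s)-J(s_x^*+te)\big)e\,dt.
\]

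\textbf{Two-norm bound.} Lemma~\ref{lemma:Lipchitz} gives $\|J(s)-J(s_x^*+te)\|_2\le L(1-t)\|e\|_\infty$, and integrating yields a factor $L\|e\|_\infty/2$. Combining this with Lemma~\ref{lem:GN-inverse},
\[
\|e^{(i+1)}\|_2\le \tfrac{M_\theta L}{2}\|e\|_\infty\|e\|_2\le \tfrac12\|e\|_2,
\]
where the last step uses $\|e\|_\infty\le\delta_\theta\le 1/(LM_\theta)$. This is \eqref{eq:GN-linear}.

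\textbf{Staying in the neighborhood.} The main obstacle is the claim that the iterates remain in the basin. The contraction above is in $\|\cdot\|_2$, but the basin is defined in $\|\cdot\|_\infty$, and $\|\cdot\|_2$ bounds would cost a factor $\sqrt{T}$. I will therefore redo the estimate blockwise in the $\infty$-norm.

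\emph{Step 1: the bracketed vector.} The matrix $J(s)-J(s_x^*+te)$ has only subdiagonal blocks, and $e_0=0$ because $\xi_0$ is fixed. So the $k$-th block of the bracketed vector is
\[
\int_0^1\big(\nabla_xF(x^*_{k-1}+te_{k-1})-\nabla_xF(\xi_{k-1})\big)e_{k-1}\,dt .
\]
By the Lipschitz bound this block has $2$-norm at most $\tfrac L2\|e\|_\infty^2$.

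\emph{Step 2: the inverse in the $\infty$-norm.} The proof of Lemma~\ref{lem:GN-inverse} already established $\|N^\ell v\|_\infty\le C_\theta c^\ell\|v\|_\infty$. Summing over $\ell$ gives $\|J(s)^{-1}v\|_\infty\le M_\theta\|v\|_\infty$.

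\emph{Step 3: conclusion.} Combining Steps 1 and 2,
\[
\|e^{(i+1)}\|_\infty\le\tfrac{M_\theta L}{2}\|e\|_\infty^2\le\tfrac12\|e\|_\infty\le\delta_\theta .
\]
This closes the induction, so every iterate remains in the neighborhood and \eqref{eq:GN-linear} holds at every step. As a by-product, the same estimate gives the quadratic bound $\|e^{(i+1)}\|_\infty\le\tfrac{M_\theta L}{2}\|e^{(i)}\|_\infty^2$, with constants independent of $T$.
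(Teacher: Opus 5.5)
Your proposal is correct and follows essentially the same route as the paper. You use the integral form of the residual, the Lipschitz bound of Lemma~\ref{lemma:Lipchitz}, and Lemma~\ref{lem:GN-inverse} to get both the $\|\cdot\|_2$ contraction and the $\|\cdot\|_\infty$ quadratic bound that keeps the iterates in the basin. Your explicit blockwise estimate, using $\|N^\ell v\|_\infty\le C_\theta c^\ell\|v\|_\infty$ from the lemma's proof, is exactly what the paper compresses into ``its blockwise counterpart.''
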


\begin{proof}
Suppose $\|e^{(i)}\|_\infty\le\delta_\theta$.
Applying the fundamental theorem of calculus along the
segment from $s_x^*$ to $s_x^{(i)}$ gives
\[
r_x(s_x^{(i)})
=\int_0^1 J(s_x^*+t e^{(i)})e^{(i)}\,dt,
\]
where $r_x(s_x^*)=0$ was used.
Subtracting $s_x^*$ from the Gauss-Newton update
and substituting this identity yields
\begin{align*}
e^{(i+1)}
&=e^{(i)}-J(s_x^{(i)})^{-1}r_x(s_x^{(i)})\\
&=J(s_x^{(i)})^{-1}
  \int_0^1
  \left[J(s_x^{(i)})-J(s_x^*+t e^{(i)})\right]
  e^{(i)}\,dt.
\end{align*}

By Lemma~\ref{lemma:Lipchitz},
\[
\|J(s_x^{(i)})-J(s_x^*+t e^{(i)})\|_2
\le L(1-t)\|e^{(i)}\|_\infty.
\]
Applying this estimate, its blockwise counterpart, and
Lemma~\ref{lem:GN-inverse} gives
\begin{align}
\|e^{(i+1)}\|_\infty
&\le\frac{LM_\theta}{2}\|e^{(i)}\|_\infty^2,
\label{eq:GN-invariance}\\
\|e^{(i+1)}\|_2
&\le\frac{LM_\theta}{2}
     \|e^{(i)}\|_\infty\|e^{(i)}\|_2,
\label{eq:GN-error-bound}
\end{align}
where $\int_0^1(1-t)\,dt=1/2$.
Since $LM_\theta\delta_\theta\le1$, the first inequality
preserves the neighborhood and the second proves
\eqref{eq:GN-linear}. Induction completes the proof.
\end{proof}

\begin{corollary}
\label{cor:GN-quadratic}
The neighborhood
$\|s_x-s_x^*\|_\infty\le\delta_\theta$,
with $\delta_\theta$ given by \eqref{eq:GN-basin},
is a horizon-independent basin of quadratic convergence.
For any initialization in this neighborhood,
\begin{align*}
\|e^{(i+1)}\|_2
&\le\frac{LM_\theta}{2}\|e^{(i)}\|_2^2,\\
\|r_x(s_x^{(i+1)})\|_2
&\le\frac{LM_\theta^2}{2}
     \|r_x(s_x^{(i)})\|_2^2.
\end{align*}
\end{corollary}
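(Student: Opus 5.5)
The plan is to reuse the error recursion from the proof of Theorem~\ref{thm:local_linear}. By that theorem, every iterate started with $\|e^{(0)}\|_\infty\le\delta_\theta$ stays in the neighborhood $\|s_x-s_x^*\|_\infty\le\delta_\theta\le\eta_\theta$. Hence Lemma~\ref{lemma:Lipchitz} and Lemma~\ref{lem:GN-inverse} hold at every iterate, with constants independent of $T$.

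For the error bound I will start from the identity already derived,
\[
e^{(i+1)}=J(s_x^{(i)})^{-1}\int_0^1\left[J(s_x^{(i)})-J(s_x^*+te^{(i)})\right]e^{(i)}\,dt,
\]
and take $2$-norms directly. Lemma~\ref{lem:GN-inverse} bounds the first factor by $M_\theta$. Lemma~\ref{lemma:Lipchitz} bounds the bracket by $L(1-t)\|e^{(i)}\|_\infty$. Integrating gives
\[
\|e^{(i+1)}\|_2\le\frac{LM_\theta}{2}\|e^{(i)}\|_\infty\|e^{(i)}\|_2,
\]
which is \eqref{eq:GN-error-bound}. Since $\|v\|_\infty=\max_k\|v_k\|_2\le\|v\|_2$, we have $\|e^{(i)}\|_\infty\le\|e^{(i)}\|_2$, and the first claimed inequality follows at once.

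For the residual bound I need two-sided comparisons between $r_x$ and $e$. The upper comparison comes from the same identity $r_x(s)=\int_0^1 J(s_x^*+te)e\,dt$: every $\|J\|_2\le1+\|N\|_2\le1+\max_{z\in K}\|\nabla_xF(z)\|_2$, which is horizon-independent. A cleaner route uses the Gauss-Newton step itself. We have $r_x(s^{(i)})=-J(s^{(i)})\Delta s^{(i)}$, and a Taylor expansion at $s^{(i)}$ gives
\[
r_x(s^{(i+1)})=\int_0^1\left[J(s^{(i)}+t\Delta s^{(i)})-J(s^{(i)})\right]\Delta s^{(i)}\,dt .
\]
Lemma~\ref{lemma:Lipchitz} then yields
\[
\|r_x(s^{(i+1)})\|_2\le\frac L2\|\Delta s^{(i)}\|_\infty\|\Delta s^{(i)}\|_2\le\frac L2\|\Delta s^{(i)}\|_2^2 .
\]
Finally, $\|\Delta s^{(i)}\|_2=\|J^{-1}r_x\|_2\le M_\theta\|r_x(s^{(i)})\|_2$, and substituting gives the stated constant $LM_\theta^2/2$.

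The main obstacle is the Lipschitz step in the residual bound. Lemma~\ref{lemma:Lipchitz} requires the points $s^{(i)}+t\Delta s^{(i)}$ to lie in $K^T$. Both endpoints $s^{(i)}$ and $s^{(i+1)}$ are within $\delta_\theta\le\eta_\theta$ of $s_x^*$ in $\|\cdot\|_\infty$, the first by assumption and the second by \eqref{eq:GN-invariance}. Each ball in $\|\cdot\|_\infty$ is a product of Euclidean balls and is therefore convex, so the whole segment stays within $\eta_\theta$ of $s_x^*$. By the choice of $\eta_\theta$ in Lemma~\ref{lem:GN-inverse}, that neighborhood lies in $K^T$, so the lemma applies. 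All constants involved ($L$, $M_\theta$, $\delta_\theta$) are independent of $T$, which gives the horizon-independent basin.
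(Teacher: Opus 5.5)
Your proposal is correct and follows the paper's proof essentially step for step: the error bound comes from \eqref{eq:GN-error-bound} together with $\|e\|_\infty\le\|e\|_2$, and the residual bound comes from a Taylor expansion at $s_x^{(i)}$ using $r_x(s)+J(s)h=0$, Lemma~\ref{lemma:Lipchitz}, and $\|h\|_2\le M_\theta\|r_x(s)\|_2$. Your convexity argument that the segment between consecutive iterates stays in the $\eta_\theta$-neighborhood, and hence in $K^T$, just spells out what the paper states in one line.
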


\begin{proof}
The error bound follows from \eqref{eq:GN-error-bound}
and $\|e\|_\infty\le\|e\|_2$.
For $h=-J(s)^{-1}r_x(s)$, we have $r_x(s)+J(s)h=0$.
Both iterations and their connecting segment remain in the
neighborhood by Theorem~\ref{thm:local_linear}.
Taylor's formula and Lemma~\ref{lemma:Lipchitz} give
\[
\|r_x(s+h)\|_2
\le\frac L2\|h\|_2^2
\le\frac{LM_\theta^2}{2}\|r_x(s)\|_2^2.
\]
\end{proof}

\begin{theorem}
\label{the:GN-finite}
For any finite $T$ and arbitrary initialization, the state
iteration in \eqref{eq:s_update} terminates in at most $T$ steps. Moreover, with the states and terminal costate fixed, the costate iteration in \eqref{eq:lambda_update} terminates in one exact step.
\end{theorem}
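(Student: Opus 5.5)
The plan is an induction on the time index that exploits the causal, block lower-bidiagonal structure of $J(s_x)$. We write the full Gauss--Newton step \eqref{eq:s_update} componentwise in the equivalent form $J(s_x^{(i)})\,s_x^{(i+1)} = J(s_x^{(i)})\,s_x^{(i)} - r_x(s_x^{(i)})$. Reading off block $k+1$ from \eqref{eq:state_cor}, this gives the linearized recursion
\[
\xi_{k+1}^{(i+1)} = F(\xi_k^{(i)}) + \nabla_xF(\xi_k^{(i)})\bigl(\xi_k^{(i+1)}-\xi_k^{(i)}\bigr),
\qquad \xi_0^{(i+1)}=\xi_0^{(i)}=x_0 .
\]
The key observation is that if $\xi_k^{(i)}=x_k^*$ is already exact at iteration $i$, and it stays exact at iteration $i+1$ (so $\xi_k^{(i+1)}=\xi_k^{(i)}$), then the correction term vanishes. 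In that case $\xi_{k+1}^{(i+1)}=F(x_k^*)=x_{k+1}^*$. No smallness of the error and no contraction are needed.

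We then prove by induction on $i\ge1$ the claim: after $i$ iterations, $\xi_k^{(i)}=x_k^*$ for all $0\le k\le i$. The base case handles $\xi_0$, which is fixed at $x_0$ for every iterate, and then $\xi_1^{(1)}=F(x_0)+\nabla_xF(x_0)\cdot 0=x_1^*$ for an arbitrary $s_x^{(0)}$. For the inductive step, assume $\xi_k^{(i)}=x_k^*$ for $k\le i$. An inner induction on $k$ within iteration $i+1$ shows that $\xi_k^{(i+1)}=\xi_k^{(i)}$ for $k\le i$, using the displayed recursion with a zero correction at each step. The same recursion at $k=i$ then gives $\xi_{i+1}^{(i+1)}=F(x_i^*)=x_{i+1}^*$. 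At $i=T$ the whole stack equals $s_x^*$, so $r_x=0$ and every subsequent step is zero; the stopping test is therefore triggered by step $T$ at the latest. Global existence of iterates is automatic, since $J$ is unit lower block-triangular and hence invertible for every $s_x$ (compare $J^{-1}=\sum_\ell N^\ell$ in the proof of Lemma~\ref{lem:GN-inverse}). For this reason no neighborhood assumption enters.

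For the costate, the states $x_k$ and the terminal value $\phi_T$ are held fixed. The map $G(x_k,\lambda,\theta)=\nabla_x l(x_k,\theta)+\nabla_xF(x_k,\theta)^\top\lambda$ from \eqref{eq:lamk1} is therefore affine in $\lambda$, with the constant Jacobian $\nabla_\lambda G=\nabla_xF(x_k)^\top$. Consequently $r_\lambda(s_\lambda)$ is affine in $s_\lambda$ with a constant, nonsingular Jacobian $J_\lambda$. For an affine residual $r_\lambda(s)=J_\lambda s+b$, the Newton step yields $s-J_\lambda^{-1}(J_\lambda s+b)=-J_\lambda^{-1}b$, which is the unique root $s_\lambda^*$. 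Hence the exact solution is reached in one step from any initialization, and the next step is zero.

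The main obstacle is bookkeeping rather than analysis. We must check that the linearized recursion really is equivalent to \eqref{eq:state_cor}, with the residual entering with the correct sign, so that the exactness of a prefix propagates. We also have to be careful with the index convention $\Delta\xi_0^{(i)}=0$. Everything else follows from causality: each Gauss--Newton step is exact on the portion of the trajectory whose predecessor is already exact, so the exact prefix grows by at least one step per iteration.
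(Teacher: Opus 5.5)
Your proposal is correct and follows the paper's proof. Both arguments use the linearized recursion $\xi_{k+1}^{(i+1)} = F(\xi_k^{(i)}) + \nabla_xF(\xi_k^{(i)})(\xi_k^{(i+1)}-\xi_k^{(i)})$ with an induction showing that the exact prefix grows by one time index per iteration, and both treat the costate by noting that the fixed-state costate residual is affine with a constant nonsingular Jacobian, so a single Newton step reaches the root. You also write out the inner induction and the invertibility of $J$ for every $s_x$, which the paper leaves implicit.
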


\begin{proof}
The state update satisfies
\[
\xi_{k+1}^{(i+1)}
=F(\xi_k^{(i)})
+\nabla_xF(\xi_k^{(i)})
 \bigl(\xi_k^{(i+1)}-\xi_k^{(i)}\bigr).
\]
Since $\xi_0^{(i)}=x_0$, induction gives
$\xi_k^{(i)}=x_k$ for $k\le\min\{i,T\}$.
Thus $s_x^{(i)}=s_x^*$ for all $i\ge T$.

The fixed-state costate residual is affine with a
nonsingular Jacobian, so one exact step solves it.
\end{proof}

\section{Numerical Experiments}
\begin{figure}[!t]
\centering
\includegraphics[width=1\linewidth]{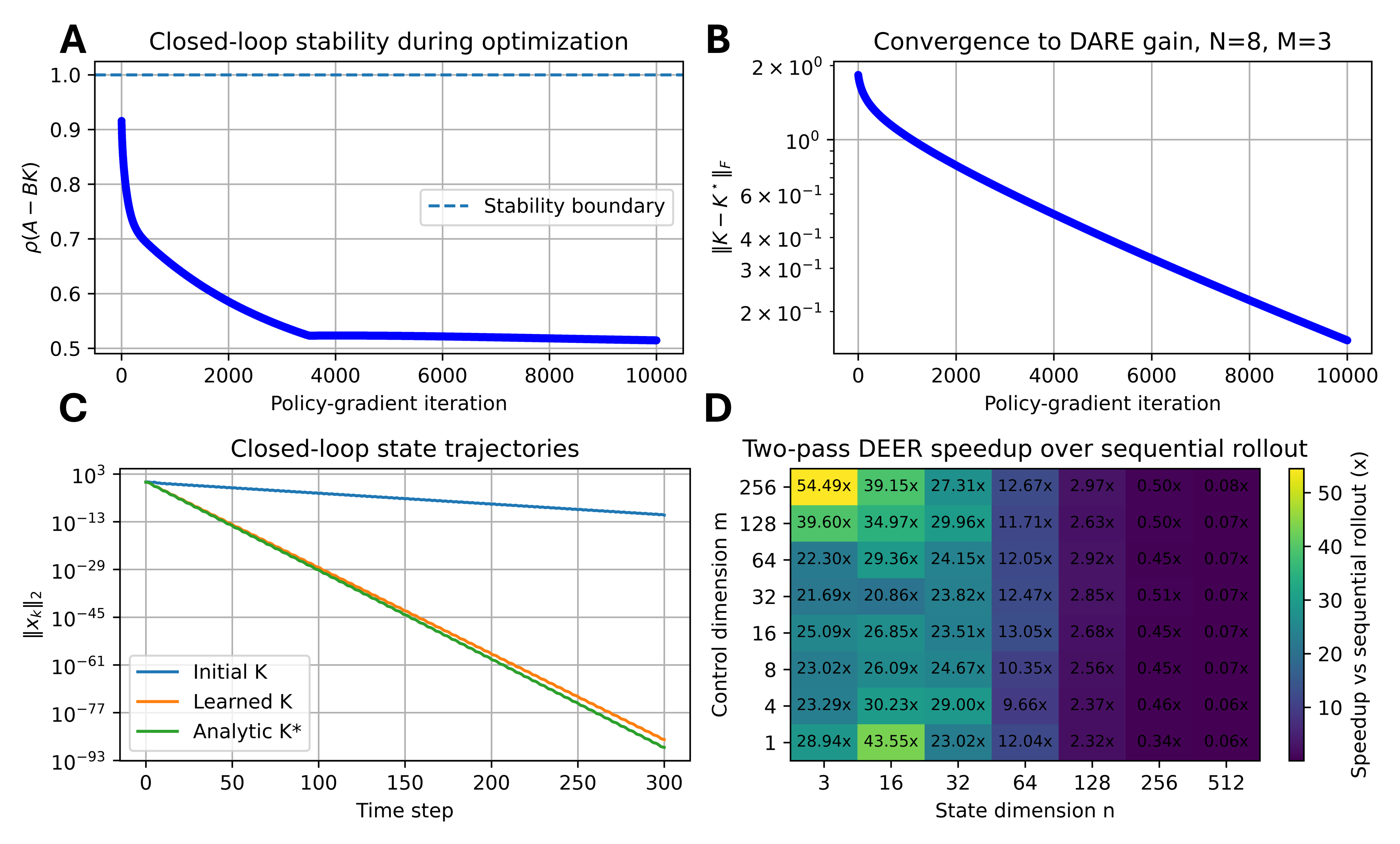}
\caption{LQR validation and computational performance. A: Stability of LQR system under the controller during optimization. B: Difference between the optimized gain and analytic optimal gain through time. C: Trajectory errors through time with the initial, learned, and analytical gain. D: Performance speedup when using the method comparing to sequential roll out.}
\label{fig:panel1}
\end{figure}
All experiments are run on a workstation equipped with an Intel Core Ultra 9 285K CPU, 128 GB of RAM, and an NVIDIA GeForce RTX 5090 GPU with 32 GB of memory. We use JAX \cite{jax2018github} throughout the experiments and use the implementation of DEER given in \cite{gonzalez2026predictability} for the Gauss-Newton method. All details for reproducing the experiments are available at \url{https://github.com/lc-lab25/Parallel-Policy-Gradient.git}.
\subsection{Discrete-time LQR}
In this section, the LQR experiment is used to validate the policy-gradient expression and evaluate the computational scaling of the parallel trajectory solver. We consider a randomly generated system matrix $A$ and choose an initial feedback controller $u=-Kx$ such that the resulting closed-loop system is stable. The stage cost is taken as $l=x^\top x+u^\top u$, and the controller is updated according to the policy-gradient expression in \eqref{eq:lqr_policy_gradient}. The stability of the controller at each iteration of Algorithm~\ref{alg:two_pass_deer} is verified and shown in Figure~\ref{fig:panel1}A. Figure~\ref{fig:panel1}B shows the convergence of the policy-gradient gain toward the analytical optimal LQR gain. Figure~\ref{fig:panel1}C reports the state-trajectory errors. Finally, Figure~\ref{fig:panel1}D compares the computational time of the parallel trajectory solver with conventional sequential rollout for sampled LQR trajectories on a CPU-based workstation. Each reported result is averaged over $20$ independent runs.

\subsection{Inertia Wheel Pendulum}
\begin{figure}[!t]
\centering
\includegraphics[width=1\linewidth]{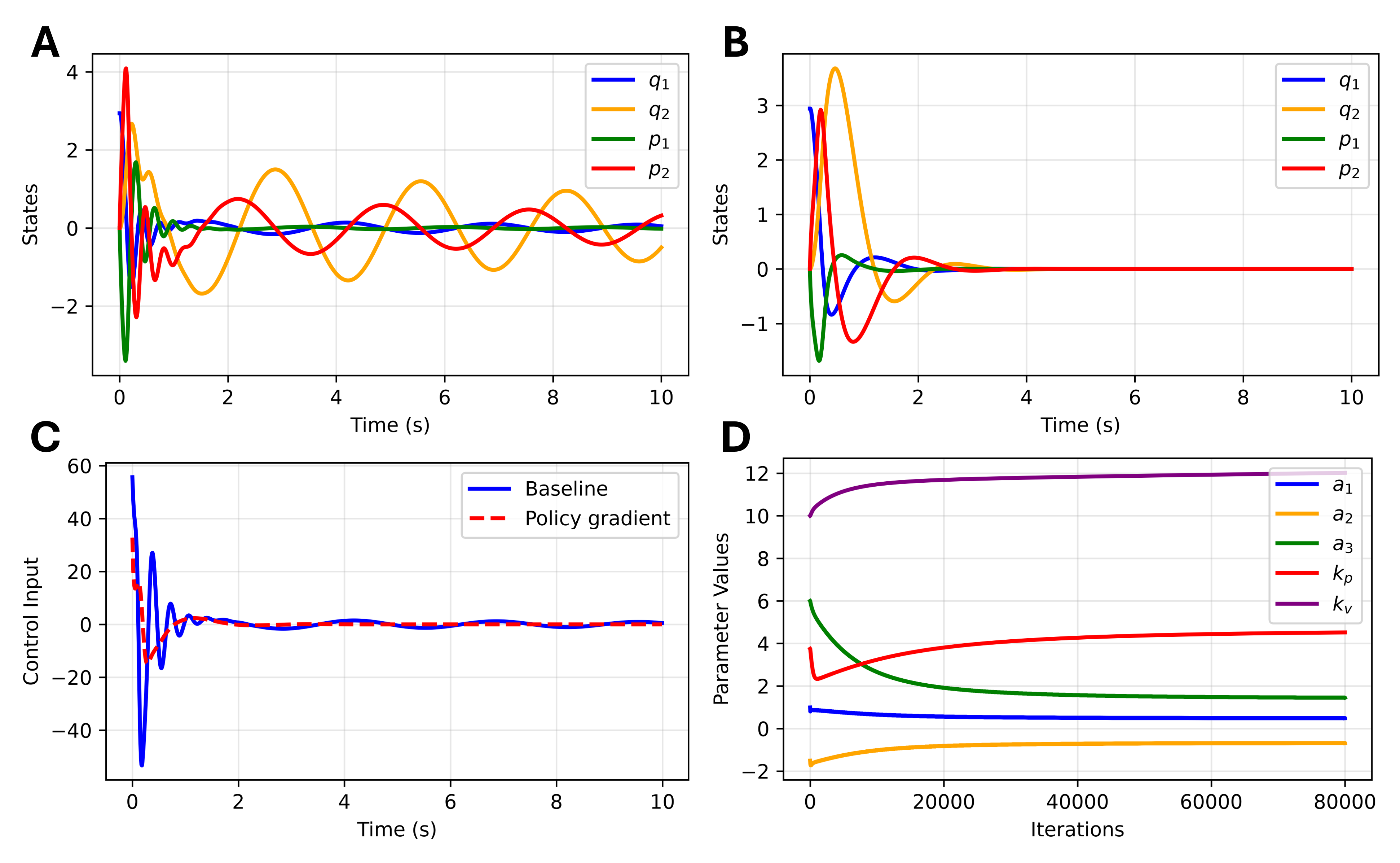}
\caption{Inertia wheel pendulum experiments. A: Baseline closed-loop state of the inertia-wheel pendulum. B: Optimized closed-loop state of the inertia-wheel pendulum. C: Comparing the baseline and policy-gradient controllers. D: Controller-parameter during policy-gradient optimization.}
\label{fig:panel2}
\end{figure}

\subsubsection{System Description and IDA-PBC Controller Design}
We evaluate the proposed method on the nonlinear underactuated inertia-wheel pendulum. Since the open-loop system is unstable, we first restrict the policy to a stabilizing IDA-PBC class and then optimize its parameters using the proposed parallel policy gradient. 

The dynamics of the inertia-wheel pendulum are given in \cite{ortega2002stabilization} as
\begin{equation}
\begin{bmatrix}
I_1+I_2 & I_2\\
I_2 & I_2
\end{bmatrix}
\begin{bmatrix}
\ddot{\omega}_1\\
\ddot{\omega}_2
\end{bmatrix}
+
\begin{bmatrix}
-mgL\sin\omega_1\\
0
\end{bmatrix}
=
\begin{bmatrix}
0\\
1
\end{bmatrix}u .
\label{eq:inertia_wheel_state}
\end{equation}
Here, $\omega_1$ denotes the pendulum angle, $\omega_2$ denotes the disk angle relative to the pendulum, and $u$ denotes the motor torque. As shown in \cite{ortega2002stabilization}, a stabilizing controller can be constructed in the form
\begin{align}
\begin{split}
u &= \pi(x,\theta) \\
&=\gamma_1\sin(q_1)
+
k_p(q_2+\gamma_2q_1)
+
k_vk_2(\dot{q}_2+\gamma_2\dot{q}_1).
\end{split}
\label{eq:ida_control}
\end{align}
Here, $ [q_1,q_2]^\top=[\omega_1,\omega_1+\omega_2]^\top$ and $k_p$ and $k_v$ are the proportional and damping-injection gains, respectively. The constants $\gamma_1$, $\gamma_2$ and $k_2$ are determined by the components of the desired inertia matrix
\begin{align*}
M_d
&=
\begin{bmatrix}
a_1 & a_2\\
a_2 & a_3
\end{bmatrix},\qquad
k_2=-I\frac{a_1+a_2}{a_1a_3-a_2^2},
\\
\gamma_1
&=
\frac{a_2}{a_1+a_2}mgL,\quad
\gamma_2
=
-\frac{I_1}{I_2}
\frac{a_2+a_3}{a_1+a_2}.
\end{align*}

The controller parameters $\theta = [k_p, k_v, a_1, a_2, a_3]^\top$ are required to satisfy the following constraints to ensure asymptotic stability:
\begin{align}
\Theta
:=
\left\{
\theta\in\R^5
\;\middle|\;
\begin{array}{l}
k_p>0,\quad k_v>0,\quad a_1>0,\\
a_1a_3-a_2^2>0,\quad -(a_1+a_2)>0
\end{array}
\right\}.
\label{eq:Theta}
\end{align}
For any $\theta\in\Theta$, the origin of the closed-loop system is asymptotically stable, with a domain of attraction equal to the entire state space except for a set of Lebesgue measure zero. The following proposition further establishes that the closed-loop system is locally exponentially stable under the controller $\pi(x,\theta)$ for any $\theta\in\Theta$. Therefore, $\Theta$ constitutes the admissible set of controller parameters.

\begin{proposition}
    The inertia wheel system \eqref{eq:inertia_wheel_state} under the controller given in \eqref{eq:ida_control} is locally exponentially stable at the origin.
\end{proposition}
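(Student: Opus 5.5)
The plan is to reduce local exponential stability to a spectral condition on the linearization at the origin, and to verify that condition through the IDA-PBC structure rather than by brute-force eigenvalue computation.

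\emph{Closed-loop structure.} By the construction in \cite{ortega2002stabilization}, for $\theta\in\Theta$ the closed loop under \eqref{eq:ida_control} can be written in port-Hamiltonian form
\[
\dot z=(J_d-R_d)\nabla H_d(z),
\qquad
H_d(q,p)=\tfrac12 p^\top M_d^{-1}p+V_d(q),
\]
where $J_d=-J_d^\top$ and $R_d=\mathrm{diag}(0,\,k_v G G^\top)\succeq0$. The constraints in \eqref{eq:Theta} are used to conclude that $M_d\succ0$ and that $V_d$ has an isolated minimum at $q=0$. I would compute $\nabla^2V_d(0)$ explicitly from $\gamma_1,\gamma_2,k_p$ and verify that it is positive definite, not merely that $0$ is a strict minimizer. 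I expect this to require $k_p>0$, $-(a_1+a_2)>0$ and the sign of $\gamma_1$ relative to $mgL$.

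\emph{Linearization.} Linearizing at the origin gives
\[
\dot z=Az,\qquad A=(J_0-R_0)P,
\]
where $P=\nabla^2H_d(0)=\mathrm{diag}(\nabla^2V_d(0),M_d^{-1})\succ0$ and $J_0,R_0$ are $J_d,R_d$ evaluated at $0$. The gradient term vanishes there, so derivatives of $J_d,R_d$ do not contribute. Take $W(z)=\tfrac12 z^\top Pz$. Along the linearized flow,
\[
\dot W=-z^\top P R_0 P z\le0,
\]
so the linearization is Lyapunov stable.

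\emph{Hurwitz property.} To show $A$ is Hurwitz I would use the linear LaSalle/observability argument: $A$ is Hurwitz if and only if the pair $\bigl(R_0^{1/2}P,\,A\bigr)$ is observable. Equivalently, no nonzero solution stays in the set $\{G^\top M_d^{-1}p=0\}$. This is the main obstacle. I would first try the direct computation: on that set the passive output, which is a combination of $\dot q_2+\gamma_2\dot q_1$, vanishes identically. Differentiating this output twice along the linear flow should force $q_2+\gamma_2q_1$ to be constant and then, through the pendulum equation, force $q_1=0$. Nondegeneracy of the resulting linear equations should follow from $k_2\neq0$ (since $a_1+a_2\neq0$), $a_1a_3-a_2^2>0$ and $\nabla^2V_d(0)\succ0$. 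If this becomes messy, the fallback is to compute the $4\times4$ characteristic polynomial of $A$ symbolically and check the Routh--Hurwitz conditions using the constraints in $\Theta$.

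\emph{Conclusion and discrete time.} Once $A$ is Hurwitz, Lyapunov's indirect method gives local exponential stability of the continuous-time closed loop. For the discrete-time model used in the paper, the Jacobian $\nabla_xF(0)$ of the sampled closed loop is either $e^{Ah}$ for exact sampling or $I+hA+O(h^2)$ for an explicit integrator. In both cases all its eigenvalues lie strictly inside the unit circle, in the second case for sufficiently small step size $h$. The discrete-time Lyapunov inequality then gives local exponential stability of \eqref{eq:closedloop} at the origin.
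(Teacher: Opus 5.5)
Your argument is correct, but it takes a different route from the paper's. The paper never uses the port-Hamiltonian structure. It writes out the $4\times4$ linearization in $(q_1,q_2,p_1,p_2)$, expands its characteristic polynomial in terms of $D=a_1a_3-a_2^2>0$, $S=a_1+a_2<0$ and $E=a_1+2a_2+a_3>0$, and checks the fourth-order Routh--Hurwitz inequalities directly; this is exactly your fallback. You instead reuse the certificate that IDA-PBC already supplies: $P=\nabla^2H_d(0)$ gives $\dot W\le0$ at no cost, and the Hurwitz property reduces to observability through the passive output.

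Two details sharpen your sketch. First, the paper's law is the IDA-PBC law with $\Phi(z)=\tfrac{k_pk_2}{2}z^2$ and $k_2=-I_2(a_1+a_2)/D>0$. Hence $\det\nabla^2V_d(0)=-k_pk_2I_1mgL/(a_1+a_2)>0$, and $\gamma_1$ plays no role in positive definiteness. Second, $\gamma_1$ does enter the observability step, where the pendulum equation alone does not force $q_1=0$. On $\{\dot q_2+\gamma_2\dot q_1\equiv0\}$ you get $q_2+\gamma_2q_1\equiv c$ and $(I_1-I_2\gamma_2)\ddot q_1=mgL\,q_1$, with $I_1-I_2\gamma_2=I_1E/(a_1+a_2)<0$, so $q_1$ is an oscillator. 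Combining this with $u=-I_2\gamma_2\ddot q_1=\gamma_1q_1+k_pc$ gives $\frac{mgL\,D}{(a_1+a_2)E}\,q_1=k_pc$. So $q_1$ is constant, hence zero, and then $c=0$. The coefficient vanishes exactly when $a_1a_3=a_2^2$, so $D>0$ is precisely the nondegeneracy you need.

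As for what each route buys: yours is shorter, explains why the constraints in $\Theta$ produce exponential stability, and comes with a quadratic (non-strict) Lyapunov function. The paper's computation is self-contained and works directly with the stated control law, without having to match its parametrization to the IDA-PBC construction.

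Your final step also goes beyond the paper. The paper only shows that the continuous-time Jacobian (which it labels $\nabla_xF(0,\theta)$) is Hurwitz and never passes to the sampled map. Your $e^{Ah}$ versus $I+hA+O(h^2)$ argument, with the small-step caveat (which also covers zero-order hold), supplies that missing link to the discrete-time definition of admissibility.
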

\begin{proof}
Linearizing \eqref{eq:inertia_wheel_state} and
\eqref{eq:ida_control} at the origin gives
\begin{align*}
\nabla_x F(0,\theta)
=
\begin{bmatrix}
0&0&\frac{1}{I_1}&0\\
0&0&0&\frac{1}{I_2}\\
mgL-\gamma_1-k_p\gamma_2
&-k_p
&-\frac{k_vk_2\gamma_2}{I_1}
&-\frac{k_vk_2}{I_2}\\
\gamma_1+k_p\gamma_2
&k_p
&\frac{k_vk_2\gamma_2}{I_1}
&\frac{k_vk_2}{I_2}
\end{bmatrix}.
\end{align*}

Also, \(D>0\), \(S<0\), and $E=\frac{S^2+D}{a_1}>0$, where
\begin{align*}
    D=a_1a_3-a_2^2,\qquad
    S=a_1+a_2,\qquad
    E=a_1+2a_2+a_3.
\end{align*}

Using the expressions for \(\gamma_1,\gamma_2\), and \(k_2\), direct
expansion gives \(\det(\lambda I-A)=b_4\lambda^4+b_3\lambda^3+b_2\lambda^2+b_1\lambda+b_0,\) where
\begin{equation*}
\begin{cases}
    b_4=I_1I_2,\\
    b_3=\frac{k_vI_1I_2E}{D},\\
    b_2=-\frac{I_2mgLa_1+k_pI_1E}{S},\\
    b_1=-\frac{k_vI_2SmgL}{D},b_0=k_pmgL.
\end{cases}
\end{equation*}
All coefficients are strictly positive. Moreover,
\begin{align*}
    b_3b_2-b_4b_1
    =
    \frac{k_vI_1I_2}{D(-S)}
    \left(
        I_2mgLD+k_pI_1E^2
    \right)&>0,\\
    b_3b_2b_1-b_4b_1^2-b_3^2b_0
    =
    \frac{k_v^2I_1I_2^3(mgL)^2}{D}&>0.
\end{align*}
Thus, all fourth-order Routh-Hurwitz conditions hold, and
\(\nabla_x F(0,\theta)\) is Hurwitz. Since the closed-loop vector field is continuously differentiable, \(x^*\) is locally exponentially stable. Therefore, there exists a forward-invariant neighborhood \(\mathcal N\) of \(x^*\) in which exponential convergence holds.
\end{proof}

\subsubsection{Controller Parameter Optimization and Simulation Results}
We adopt the benchmark example in \cite{ortega2002stabilization}, using the same system parameters. The controller parameters reported therein are used as the initialization for the policy-gradient updates and also serve as the baseline for comparison:
\begin{equation*}
\begin{split}    
I_1&=0.1,\quad I_2=0.2,\quad mgL=10,\\
a_1&=1,\quad a_2=-1.5,\quad a_3=6,\quad k_p=3.75,\quad k_v=10.
\end{split}
\end{equation*}

The projected gradient method is employed to ensure that the updated controller parameters remain within the feasible set defined in \eqref{eq:Theta}. For comparison, the state trajectory under the baseline controller from \cite{ortega2002stabilization} is shown in Figure~\ref{fig:panel2}A, while the trajectory obtained using the controller optimized by Algorithm~\ref{alg:two_pass_deer} is shown in Figure~\ref{fig:panel2}B. The optimized controller drives the system to steady state substantially faster than the baseline controller. The corresponding baseline and optimized control inputs are compared in Figure~\ref{fig:panel2}C, where the optimized controller achieves the desired performance with reduced control effort. Finally, the evolution of the controller parameters is shown in Figure~\ref{fig:panel2}D, demonstrating smooth convergence of the policy-gradient updates.

Figure~\ref{fig:conv} shows the convergence of the Gauss-Newton iterations for the inertia-wheel pendulum across different trajectory horizons. The parallel state-trajectory solver converges for all tested horizons, consistent with Theorem~\ref{the:GN-finite}. After several initial iterations, the error decreases rapidly, qualitatively supporting the local quadratic convergence established in Corollary~\ref{cor:GN-quadratic}. Nevertheless, longer horizons may require additional iterations before the iterates enter the local region in which quadratic convergence becomes dominant.

\begin{figure}[!t]
    \centering
    \includegraphics[width=1\linewidth]{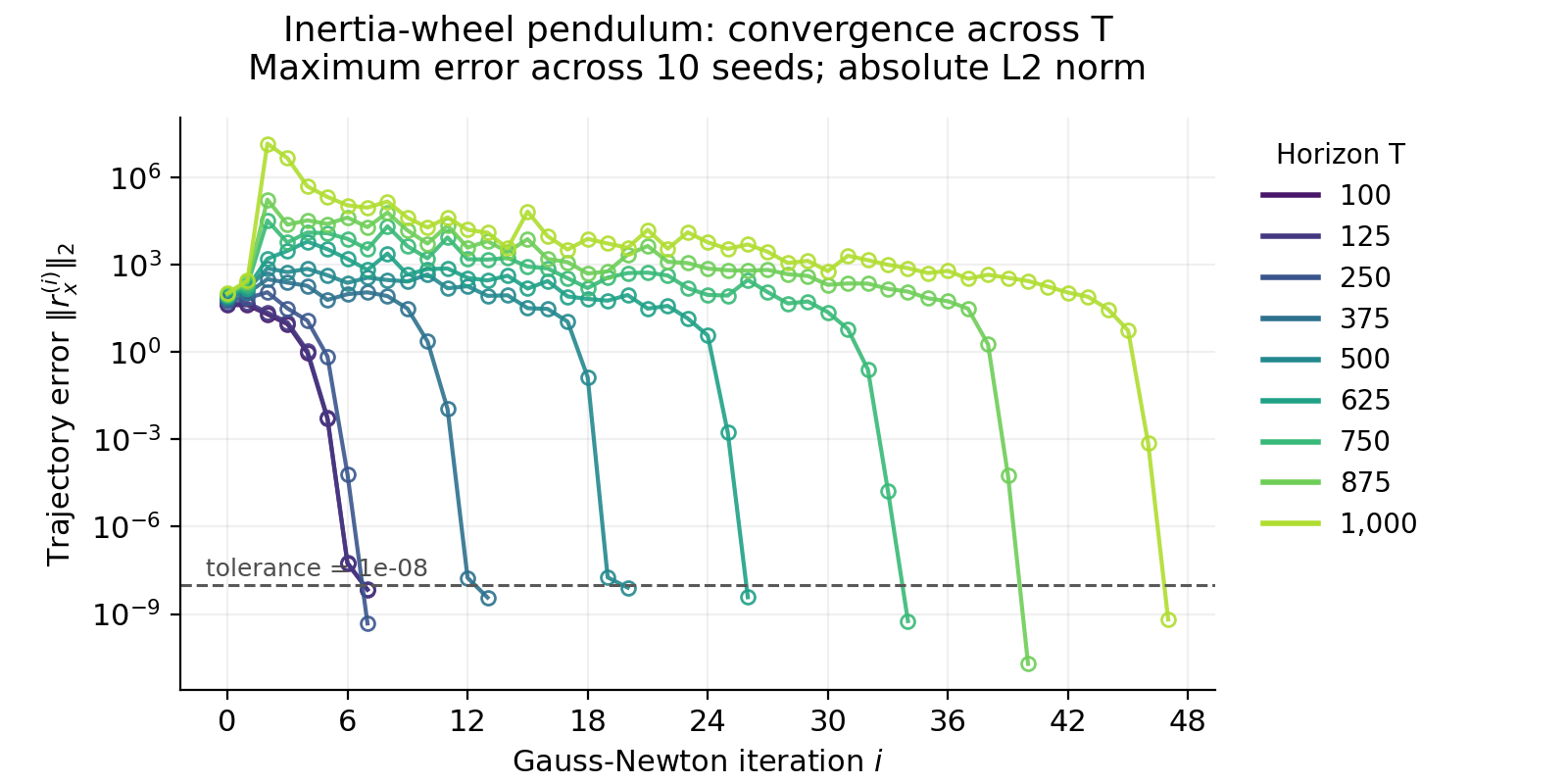}
    \caption{Gauss-Newton convergence for the inertia-wheel pendulum across nine trajectory horizons ranging from $(T=100)$ to $(T=1000)$. Each curve shows the maximum absolute trajectory error over 10 random seeds, measured against the sequential rollout. All tested horizons reach the $(10^{-8})$ tolerance (dashed line).}
    \label{fig:conv}
\end{figure}
\section{CONCLUSIONS}

This letter develops a time-parallel policy-optimization framework for discrete-time nonlinear control-affine systems. The policy gradient is expressed in terms of forward state and backward costate trajectories, which are evaluated in parallel using Gauss-Newton iterations and associative scans. We establish that the Gauss-Newton state solver converges locally at a horizon-independent linear rate, while also enjoying global finite-step convergence, with the worst-case number of iterations bounded by the rollout horizon \(T\). Owing to its linear structure, the costate rollout is recovered exactly in a single Gauss-Newton iteration once the state trajectory is fixed. Numerical results for LQR and an inertia-wheel pendulum demonstrate the effectiveness and computational efficiency of the proposed approach. Future work will investigate continuous-time extensions and time-parallel sampling-based control methods.


\bibliographystyle{IEEEtran}
\bibliography{ref}

\end{document}